\documentclass[11pt,a4paper]{article}

\usepackage[margin=1in,headheight=14pt]{geometry}
\usepackage{setspace}
\usepackage{amsmath,amssymb,amsthm}
\usepackage{mathtools}

\usepackage{graphicx}
\usepackage{float}
\usepackage{subcaption}

\usepackage{booktabs}
\usepackage{array}
\usepackage{longtable}
\usepackage{multirow}

\usepackage[dvipsnames]{xcolor}
\usepackage[colorlinks=true,linkcolor=blue,citecolor=blue,urlcolor=blue]{hyperref}

\usepackage{listings}
\usepackage{algorithm}
\usepackage{algpseudocode}

\usepackage{microtype}
\usepackage{parskip}

\usepackage{fancyhdr}
\usepackage[numbers,sort&compress]{natbib}

\newtheorem{theorem}{Theorem}[section]
\newtheorem{proposition}[theorem]{Proposition}
\newtheorem{corollary}[theorem]{Corollary}

\theoremstyle{definition}
\newtheorem{definition}[theorem]{Definition}
\newtheorem{example}[theorem]{Example}
\theoremstyle{remark}
\newtheorem{remark}[theorem]{Remark}

\newcommand{\SaR}{\mathrm{SaR}}
\newcommand{\ESaR}{\mathrm{ESaR}}
\newcommand{\TSaR}{\mathrm{TSaR}}
\newcommand{\HHI}{\mathrm{HHI}}
\newcommand{\Neff}{N_{\mathrm{eff}}}
\newcommand{\CR}{\mathrm{CR}}
\newcommand{\OI}{\mathrm{OI}}
\newcommand{\IF}{\mathrm{IF}}
\newcommand{\VaR}{\mathrm{VaR}}
\newcommand{\IM}{\mathrm{IM}}
\newcommand{\MM}{\mathrm{MM}}
\newcommand{\PnL}{\mathrm{PnL}}
\newcommand{\E}{\mathbb{E}}
\newcommand{\R}{\mathbb{R}}
\newcommand{\ind}{\mathbf{1}}

\begin{document}


\begin{titlepage}
\centering
\vspace*{2cm}

{\Huge\bfseries Slippage-at-Risk (SaR):\par}
\vspace{0.5cm}
{\LARGE A Forward-Looking Liquidity Risk Framework\\for Perpetual Futures Exchanges\par}

\vspace{2cm}

{\Large Otar Sepper\par}
\vspace{0.3cm}
{\large SepperLabs\par}
{\large Formerly Gauntlet\par}
{\large \texttt{otar@sepperlabs.com}\par}

\vspace{1cm}

{\large March 2026\par}

\vfill

\begin{abstract}
\noindent We introduce \textbf{Slippage-at-Risk (SaR)}, a quantitative framework for measuring liquidity risk in perpetual futures exchanges. Unlike backward-looking metrics such as Value-at-Risk computed on historical returns or realized deficit distributions, SaR provides a \emph{forward-looking} assessment of liquidation execution risk derived from current order book microstructure. The framework comprises three complementary metrics: $\SaR(\alpha)$, the cross-sectional slippage quantile; $\ESaR(\alpha)$, the expected slippage in the distributional tail; and $\TSaR(\alpha)$, the aggregate dollar-denominated tail slippage. We extend the base framework with a \emph{concentration adjustment} that penalizes fragile liquidity structures where a small number of market makers dominate quote provision. Drawing on recent work by Chitra et al.\ (2025) on autodeleveraging mechanisms and insurance fund optimization, we establish a direct mapping from SaR metrics to optimal capital requirements. Empirical analysis using Hyperliquid order book data, including the October 10, 2025 liquidation cascade, demonstrates SaR's predictive validity as a leading indicator of systemic stress. We conclude with practical implementation guidance and discuss philosophical implications for risk management in decentralized financial systems.
\end{abstract}

\vspace{0.5cm}
\noindent\textbf{Keywords:} Liquidity risk, perpetual futures, order book microstructure, autodeleveraging, insurance fund, DeFi risk management

\end{titlepage}

\setcounter{page}{2}


\tableofcontents
\newpage


\section{Introduction}
\label{sec:introduction}

\subsection{Motivation}
\label{sec:motivation}

Perpetual futures have emerged as the dominant derivative instrument in cryptocurrency markets, with daily trading volumes regularly exceeding \$100 billion across major venues. Unlike traditional futures contracts with fixed expiration dates, perpetuals use a funding rate mechanism to maintain price alignment with spot markets, enabling indefinite position holding. This innovation has democratized access to leveraged trading, but introduced novel systemic risks that existing frameworks do not adequately address.

The core vulnerability lies in the interaction between leverage, liquidations, and order book liquidity. When a leveraged position's collateral falls below maintenance requirements, the exchange must liquidate it - selling long positions or buying back short positions at prevailing market prices. During periods of market stress, cascading liquidations can overwhelm available liquidity, resulting in execution prices far worse than theoretical bankruptcy prices. The resulting bad debt - the shortfall between bankruptcy price and actual execution price - must be absorbed by the exchange's insurance fund or socialized across profitable traders through autodeleveraging (ADL).

The October 10, 2025, Hyperliquid event starkly illustrated these dynamics: \$2.1 billion in liquidations over 12 minutes generated \$304.5 million in deficits requiring socialization, with the exchange's queue-based ADL policy expending \$704.6 million in haircuts - an 8$\times$ overutilization relative to the actual deficit \cite{chitra2025adl}. This event was not a black swan but rather the predictable consequence of thin order books under stress conditions that were, in principle, observable before the cascade began.

\subsection{The Gap in Existing Approaches}
\label{sec:gap}

Current risk management frameworks for perpetuals exchanges exhibit several critical limitations:

\begin{enumerate}
    \item \textbf{Backward-looking metrics:} Traditional Value-at-Risk (VaR) and Expected Shortfall (ES) are computed on historical return distributions. They capture past volatility, but they provide no direct measurement of current capacity to absorb liquidations.
    
    \item \textbf{Undefined for new assets:} Historical deficit-based metrics cannot be computed for newly listed tokens that have not yet experienced stress events. This creates a dangerous blind spot precisely where risk is highest - illiquid markets with unproven depth.
    
    \item \textbf{Static calibration:} Position limits and leverage caps are typically set at the time of listing and adjusted infrequently, failing to adapt to evolving liquidity conditions.
    
    \item \textbf{Ignoring microstructure:} The order book - the actual resource that determines execution quality - is rarely incorporated into formal risk metrics.
\end{enumerate}

\subsection{Contribution and Outline}
\label{sec:contribution}

This paper introduces Slippage-at-Risk (SaR), a framework that directly addresses these limitations by measuring liquidation execution risk from the observable order book state. Our contributions are:

\begin{enumerate}
    \item \textbf{Core Framework (Section~\ref{sec:framework}):} We define the slippage function $S_i(Q)$ and derive three complementary metrics: $\SaR(\alpha)$, $\ESaR(\alpha)$, and $\TSaR(\alpha)$. We establish their interpretation as cross-sectional liquidity quantiles and discuss notional calibration.
    
    \item \textbf{Concentration Adjustment (Section~\ref{sec:concentration}):} We formalize the fragility of concentrated liquidity and derive a haircut formula based on the Herfindahl-Hirschman Index (HHI) that adjusts slippage upward for books dominated by few providers.
    
    \item \textbf{Theoretical Connections (Section~\ref{sec:theory}):} We establish the causal chain from order book depth to deficit distributions and derive the SaR-implied insurance fund formula, connecting our work to Chitra et al.\ (2025).
    
    \item \textbf{Extensions (Section~\ref{sec:extensions}):} We extend the framework to incorporate cascade effects, time dynamics, cross-token correlation, and spoofing detection.
    
    \item \textbf{Empirical Analysis (Section~\ref{sec:empirical}):} We apply the framework to Hyperliquid order book data, validate SaR's predictive properties, and analyze the October 10, 2025 event.
    
    \item \textbf{Implementation and Implications (Sections~\ref{sec:implementation}--\ref{sec:conclusion}):} We provide practical guidance for deployment and discuss broader implications for DeFi risk management.
\end{enumerate}


\section{Preliminaries: Perpetuals Exchange Mechanics}
\label{sec:preliminaries}

We briefly review the mechanics of perpetual futures exchanges, following the notation of Chitra et al.\ (2025). Readers familiar with these concepts may proceed to Section~\ref{sec:framework}.

\subsection{Positions and Leverage}
\label{sec:positions}

A position on a perpetuals exchange is a tuple $p_i = (q_i, c_i, t_i, b_i)$ where:
\begin{itemize}
    \item $q_i \in \R$ is the signed quantity (positive for long, negative for short)
    \item $c_i > 0$ is the collateral posted
    \item $t_i$ is the entry timestamp
    \item $b_i \in \{-1, +1\}$ indicates direction (long/short)
\end{itemize}

The \emph{leverage} of position $i$ at time $t$ is:
\begin{equation}
    \ell_{i,t} = \frac{p_t \cdot |q_i|}{c_{i,t}}
\end{equation}
where $p_t$ is the mark price and $c_{i,t}$ is current collateral (initial collateral adjusted for realized PnL and funding payments). Exchanges impose a maximum leverage $\ell_{\max}$, typically 10-100$\times$ depending on the asset.

\subsection{Margin Requirements and Liquidation}
\label{sec:margin}

Positions are subject to two margin thresholds:
\begin{itemize}
    \item \textbf{Initial Margin (IM):} Minimum collateral required to open a position, typically $\IM = 1/\ell_{\max}$ of notional.
    \item \textbf{Maintenance Margin (MM):} Minimum collateral required to keep a position open, typically $\MM = \mu \cdot \IM$ for some $\mu \in (0,1)$.
\end{itemize}

The \emph{equity} of position $i$ at time $t$ is:
\begin{equation}
    e_{i,t} = c_{i,t} + \PnL_{i,t} = c_{i,t} + q_i(p_t - p_{t_i})
\end{equation}

A position is liquidated when equity falls below maintenance margin:
\begin{equation}
    e_{i,t} < \MM_i \implies \text{liquidation triggered}
\end{equation}

Define the \emph{bankruptcy price} $p_i^{\mathrm{bk}}$ as the price at which equity equals zero:
\begin{equation}
    p_i^{\mathrm{bk}} = p_{t_i} - \frac{c_i}{q_i} \quad \text{(for longs)}
\end{equation}
and the \emph{liquidation price} $p_i^{\mathrm{liq}}$ as the price at which equity equals maintenance margin.

\subsection{Execution and Bad Debt}
\label{sec:execution}

When liquidating position $i$ with quantity $\Delta q_i = -q_i$, the exchange executes a market order that consumes order book liquidity. The \emph{execution price} $p_i^{\mathrm{exec}}$ depends on available depth. Under a linear market impact model:
\begin{equation}
    p_i^{\mathrm{exec}} = p_t - \frac{\lambda}{2} \Delta q_i
\end{equation}
where $\lambda > 0$ is the impact coefficient.

Bad debt arises when execution price is worse than bankruptcy price. For a liquidated position $i$, the deficit is:
\begin{equation}
    D_i = \max\!\Big(0,\; q_i \cdot \big(p_i^{\mathrm{bk}} - p_i^{\mathrm{exec}}\big)\Big)
\end{equation}
That is, $D_i > 0$ only when the position is closed at a price beyond its bankruptcy level, generating a loss that must be absorbed by the insurance fund.

Total deficit at time $T$ is $D_T = \sum_i D_i$ over all liquidated positions.

\subsection{Insurance Fund and Autodeleveraging}
\label{sec:insurance}

The insurance fund $\IF_t$ absorbs deficits up to its capacity:
\begin{equation}
    \IF_{t+1} = \IF_t + \text{(liquidation fees)} + \text{(trading fees)} - \min\{\IF_t, D_t\}
\end{equation}

When $D_t > \IF_t$, the residual $R_t = \max(0, D_t - \IF_t)$ triggers autodeleveraging: profitable positions are forcibly closed to cover the shortfall. Different ADL policies (Queue, Pro-Rata, Risk-Aware Pro-Rata) allocate haircuts differently across winners, with significant implications for fairness and revenue \cite{chitra2025adl}.


\section{The Slippage-at-Risk Framework}
\label{sec:framework}

\subsection{Setup and Notation}
\label{sec:setup}

Consider a perpetuals exchange at time $t$ listing $N$ tokens indexed by $i \in \{1, \ldots, N\}$. For each token $i$, we observe:
\begin{itemize}
    \item $p_{i,t}$: mid-price
    \item $\OI_{i,t}$: open interest (total outstanding notional)
    \item $B_i(\cdot)$: bid-side cumulative depth function
    \item $A_i(\cdot)$: ask-side cumulative depth function
\end{itemize}

The order book is represented as a set of limit orders. On the bid side, let $\{(p_k^b, q_k^b)\}_{k=1}^{K_b}$ denote price-quantity pairs with $p_1^b > p_2^b > \cdots > p_{K_b}^b$. The cumulative depth at price $p$ is:
\begin{equation}
    B_i(p) = \sum_{k: p_k^b \geq p} q_k^b
\end{equation}
with analogous definition for the ask side.

\subsection{The Slippage Function}
\label{sec:slippage}

\begin{definition}[Slippage Function]
For token $i$, the \emph{slippage function} $S_i: \R_+ \to \R_+$ measures the volume-weighted average execution shortfall when liquidating notional quantity $Q$:
\begin{equation}
    S_i(Q) = \frac{1}{Q} \int_0^Q \left[ p_{\mathrm{mid}} - p(q) \right] dq
\end{equation}
where $p(q)$ is the marginal execution price for the $q$-th unit.
\end{definition}

For a discrete order book, slippage is computed by walking the book. Algorithm~\ref{alg:slippage} provides the pseudocode.

\begin{algorithm}[H]
\caption{Compute Slippage from Order Book}
\label{alg:slippage}
\small
\begin{algorithmic}
\Require \texttt{order\_book} = list of (price, size) sorted best-to-worst
\Require $Q$ = quantity to liquidate, \texttt{mid\_price} = reference price
\Ensure Slippage as percentage of mid\_price
\Statex
\State $\texttt{remaining} \gets Q$; \quad $\texttt{total\_cost} \gets 0$
\For{each level $(\texttt{price}, \texttt{size})$ in \texttt{order\_book}}
    \State $\texttt{fill} \gets \min(\texttt{size}, \texttt{remaining})$
    \State $\texttt{total\_cost} \gets \texttt{total\_cost} + \texttt{fill} \times (\texttt{mid\_price} - \texttt{price})$
    \State $\texttt{remaining} \gets \texttt{remaining} - \texttt{fill}$
    \If{$\texttt{remaining} = 0$} \textbf{exit loop} \EndIf
\EndFor
\Statex
\If{$\texttt{remaining} > 0$} \Return $\infty$ \hfill $\triangleright$ Insufficient liquidity \EndIf
\State \Return $\texttt{total\_cost} \,/\, (Q \times \texttt{mid\_price})$ \hfill $\triangleright$ As percentage
\end{algorithmic}
\end{algorithm}

\begin{example}
Liquidating 100 BTC at mid = \$60,000 with total cost of \$180,000 gives slippage = 3\%.
\end{example}

\begin{remark}[Directional Slippage]
Liquidations are directional: long liquidations sell into bids, short liquidations buy into asks. Define:
\begin{align}
    S_i^{\mathrm{bid}}(Q) &:\ \text{slippage from selling } Q \text{ (long liquidations)} \\
    S_i^{\mathrm{ask}}(Q) &:\ \text{slippage from buying } Q \text{ (short liquidations)}
\end{align}
In practice, use the direction matching dominant positioning. If open interest is skewed long, $S_i^{\mathrm{bid}}$ is the relevant measure.
\end{remark}

\subsection{Notional Calibration}
\label{sec:notional}

The quantity $Q_i$ at which to evaluate slippage should reflect a stress liquidation scenario. We parameterize:
\begin{equation}
    Q_i = \min\left( \beta \cdot \OI_i, \, \VaR_{0.95}(\text{Position Size}_i) \right)
\end{equation}
where:
\begin{itemize}
    \item $\beta \in [0.05, 0.20]$ is the \emph{stress parameter}, representing the fraction of open interest that liquidates simultaneously in a stress scenario.
    \item $\VaR_{0.95}(\text{Position Size}_i)$ caps $Q_i$ at the 95th percentile of historical position sizes, preventing extreme outliers from dominating.
\end{itemize}

\begin{remark}[Choosing $\beta$]
The appropriate stress parameter depends on market conditions:
\begin{itemize}
    \item \textbf{Normal conditions:} $\beta = 0.05$-$0.10$
    \item \textbf{Elevated volatility:} $\beta = 0.10$-$0.15$
    \item \textbf{Crisis scenarios:} $\beta = 0.15$-$0.20$
\end{itemize}
During the October 10, 2025 Hyperliquid event, realized liquidations represented approximately 15\% of pre-event open interest for affected tokens.
\end{remark}

\subsection{Cross-Sectional SaR}
\label{sec:sar}

\begin{definition}[Slippage-at-Risk]
The \emph{Slippage-at-Risk} at confidence level $\alpha \in (0,1)$ is the $\alpha$-quantile of slippage across all tokens:
\begin{equation}
    \SaR(\alpha) = \inf\left\{ s \in \R_+ : \frac{1}{N} \sum_{i=1}^N \ind\{S_i(Q_i) \leq s\} \geq \alpha \right\}
\end{equation}
\end{definition}

\textbf{Interpretation:} $\SaR(0.95) = 3\%$ means that 95\% of tokens have slippage at or below 3\% when liquidating their calibrated stress notional. The remaining 5\% are ``tail'' tokens with critically thin liquidity.

\subsection{Expected Slippage at Risk (ESaR)}
\label{sec:esar}

\begin{definition}[Expected Slippage at Risk]
The \emph{Expected Slippage at Risk} is the conditional expected slippage among tail tokens:
\begin{equation}
    \ESaR(\alpha) = \E\left[ S_i(Q_i) \,\big|\, S_i(Q_i) > \SaR(\alpha) \right]
\end{equation}
\end{definition}

$\ESaR$ measures the \emph{severity} of tail liquidity risk, not merely its boundary. It is the slippage analog of Expected Shortfall in the VaR literature.

\subsection{Total Dollar Slippage at Risk (TSaR)}
\label{sec:tsar}

\begin{definition}[Total Dollar Slippage at Risk]
The \emph{Total Dollar Slippage at Risk} aggregates dollar-denominated slippage from tail tokens:
\begin{equation}
    \TSaR_{\$}(\alpha) = \sum_{i: S_i(Q_i) > \SaR(\alpha)} S_i(Q_i) \cdot Q_i
\end{equation}
\end{definition}

$\TSaR_{\$}$ represents the \emph{total execution shortfall} that would materialize if stress liquidations occurred across all tail tokens simultaneously. This quantity maps directly to potential deficit exposure.

\subsection{Weighted SaR}
\label{sec:weighted}

The basic $\SaR$ treats all tokens equally. For risk management purposes, weighting by exposure is often more appropriate:

\begin{definition}[Open Interest-Weighted SaR]
\begin{equation}
    \SaR^w(\alpha) = \inf\left\{ s \in \R_+ : \frac{\sum_{i: S_i(Q_i) \leq s} \OI_i}{\sum_{i=1}^N \OI_i} \geq \alpha \right\}
\end{equation}
\end{definition}

$\SaR^w(0.95)$ answers: ``What slippage level contains 95\% of the open interest (not 95\% of tokens)?'' This weights toward tokens with larger aggregate positions.


\section{Concentration Adjustment}
\label{sec:concentration}

\subsection{The Fragility of Concentrated Liquidity}
\label{sec:fragility}

Raw slippage $S_i(Q)$ measures depth but ignores its \emph{structure}. Two order books with identical depth profiles can have vastly different fragility:

\begin{example}
Consider two order books, each with \$1M total bid-side liquidity:
\begin{itemize}
    \item \textbf{Book A (Distributed):} 100 market makers, each providing \$10,000
    \item \textbf{Book B (Concentrated):} 2 market makers, each providing \$500,000
\end{itemize}

Both yield $S(Q) = s$ for some quantity $Q$. However, if one participant withdraws:
\begin{itemize}
    \item Book A loses 1\% of depth
    \item Book B loses 50\% of depth
\end{itemize}

Book B's apparent liquidity is fragile - it can vanish suddenly if a dominant provider exits.
\end{example}

This fragility is especially concerning in the context of:
\begin{itemize}
    \item \textbf{Spoofing:} Artificial liquidity placed to manipulate prices, withdrawn before execution
    \item \textbf{Correlated withdrawal:} Market makers pulling quotes simultaneously during stress
    \item \textbf{Single points of failure:} Dominant providers experiencing technical issues or insolvency
\end{itemize}

\subsection{Concentration Metrics}
\label{sec:metrics}

For token $i$, let $M_i$ accounts provide liquidity with quantities $\ell_1, \ldots, \ell_{M_i}$. Define shares:
\begin{equation}
    w_m = \frac{\ell_m}{\sum_{j=1}^{M_i} \ell_j}, \quad m = 1, \ldots, M_i
\end{equation}

\begin{definition}[Herfindahl-Hirschman Index]
\begin{equation}
    \HHI_i = \sum_{m=1}^{M_i} w_m^2
\end{equation}
\end{definition}

\textbf{Properties:}
\begin{itemize}
    \item $\HHI \in [1/M, 1]$
    \item $\HHI = 1/M$ when liquidity is uniformly distributed
    \item $\HHI = 1$ when a single provider supplies all liquidity (monopoly)
\end{itemize}

\begin{definition}[Effective Number of Providers]
\begin{equation}
    \Neff = \frac{1}{\HHI}
\end{equation}
\end{definition}

$\Neff$ represents the number of equal-sized providers that would produce the observed concentration. In the example above, Book A has $\Neff = 100$ while Book B has $\Neff = 2$.

\begin{definition}[Top-$k$ Concentration Ratio]
\begin{equation}
    \CR_k = \sum_{m=1}^k w_{(m)}
\end{equation}
where $w_{(m)}$ is the $m$-th largest share. $\CR_1$ measures dominance of the single largest provider.
\end{definition}

\subsection{The Concentration Haircut}
\label{sec:haircut}

We now derive a haircut that adjusts slippage upward based on concentration.

\subsubsection{Withdrawal Scenario Analysis}

Consider the scenario where provider $m$ withdraws. Let $S_i^{(-m)}(Q)$ denote slippage computed on the residual book. The \emph{withdrawal-based haircut} is:
\begin{equation}
    h_i^{\mathrm{withdraw}} = \frac{S_i^{(-1)}(Q) - S_i(Q)}{S_i(Q)} = \frac{S_i^{(-1)}(Q)}{S_i(Q)} - 1
\end{equation}
where $(-1)$ indicates removal of the largest provider.

\subsubsection{Connection Between HHI and Withdrawal Impact}

\begin{proposition}
\label{prop:hhi}
Assume slippage scales inversely with depth: $S \propto 1/L$ where $L$ is total liquidity. If provider $m$ is selected uniformly at random to withdraw, then:
\begin{equation}
    \E\left[ \frac{S^{(-m)}}{S} - 1 \right] = \E\left[ \frac{w_m}{1 - w_m} \right] \approx \frac{\HHI}{1 - 1/M} + O(\HHI^2)
\end{equation}
for small concentration.
\end{proposition}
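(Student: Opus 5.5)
The plan has two steps: derive the exact identity in the middle of the display directly from the scaling assumption, then obtain the approximation by expanding around the equal-shares configuration.

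\textbf{Exact identity.} Under $S \propto 1/L$, write $S = c/L$ with the constant $c$ depending only on $Q$ and the book shape. This is the modelling assumption of Proposition~\ref{prop:hhi}, which treats withdrawal as a proportional thinning of depth. Removing provider $m$ leaves residual liquidity $L^{(-m)} = L - \ell_m = L(1-w_m)$. Hence
\begin{equation*}
\frac{S^{(-m)}}{S} - 1 = \frac{L}{L(1-w_m)} - 1 = \frac{w_m}{1-w_m}.
\end{equation*}
Taking expectations over the random choice of $m$ gives the first equality. For uniform selection this is explicitly $\frac{1}{M}\sum_{m=1}^M \frac{w_m}{1-w_m}$.

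\textbf{Approximation.} The key observation is that at equal shares $w_m = 1/M$ every term equals $\frac{1/M}{1-1/M}$. With $\HHI = 1/M$ in that case, this is exactly $\HHI/(1-1/M)$, so the approximate formula is exact at the symmetric point. I then set $w_m = 1/M + \delta_m$ with $\sum_m \delta_m = 0$, so that $\HHI = 1/M + \sum_m \delta_m^2$.

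Next, Taylor-expand $f(w) = w/(1-w)$ around $1/M$. The linear term averages to zero because $\sum_m \delta_m = 0$. What remains is $f(1/M) + \frac{1}{2M} f''(1/M) \sum_m \delta_m^2 + O(\sum_m |\delta_m|^3)$. Comparing with $\HHI/(1-1/M) = f(1/M) + \frac{\sum_m \delta_m^2}{1-1/M}$, the two differ only at order $\sum_m \delta_m^2 = \HHI - 1/M$. In the small-concentration regime ($\HHI$ small, so also $1/M \le \HHI$ small) this is $O(\HHI)$ with a coefficient that is itself $O(\HHI)$, giving the $O(\HHI^2)$ remainder.

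\textbf{Main obstacle.} The hard step is making the error term honest. Under strictly uniform selection, the naive expansion $f(w) = w + w^2 + \dots$ gives $\frac{1}{M}(1 + \HHI + \dots)$, not $\HHI + \dots$. So the leading term $\HHI/(1-1/M)$ is only recovered because $1/M$ and $\HHI$ coincide to first order near the symmetric point. I would therefore state the result precisely as an expansion in the dispersion $\HHI - 1/M$ around equal shares, and check it against the two books in the earlier example.

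As a cross-check, I would also verify the size-weighted alternative. If $m$ is drawn with probability $w_m$, then $\sum_m w_m \frac{w_m}{1-w_m} = \HHI + \sum_m w_m^3 + \dots$, and $\sum_m w_m^3 \le \HHI^2$ fails in general. So the uniform-selection-near-symmetry reading is the one I would commit to.
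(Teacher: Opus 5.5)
Your exact identity matches the paper's. Your observation that uniform selection gives $\frac1M(1+\HHI+\cdots)$ rather than $\HHI+\cdots$ is right. It is exactly where the paper's own argument slips: it writes $\frac1M\sum_m w_m^2=\HHI$ (the correct value is $\HHI/M$), then discards $1/M$ and ``rearranges''.

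The gap is in your repair, at the claim that the mismatch has ``a coefficient that is itself $O(\HHI)$''. Finishing your own expansion with $f''(w)=2/(1-w)^3$ gives, up to cubic terms in the $\delta_m$,
\begin{equation*}
\frac{\HHI}{1-1/M}-\frac1M\sum_{m=1}^M\frac{w_m}{1-w_m}
=\Bigl(\HHI-\frac1M\Bigr)\left[\frac{1}{1-1/M}-\frac{1}{M(1-1/M)^3}\right]+\cdots .
\end{equation*}
The bracket equals $M(M^2-3M+1)/(M-1)^3$, which tends to $1$. The averaged expectation picks up the dispersion with weight about $1/M$, while $\HHI/(1-1/M)$ carries it with weight about $1$. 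So the error is essentially the full dispersion $\HHI-1/M$. That is $O(\HHI^2)$ only if $\sum_m\delta_m^2=O(M^{-2})$, i.e.\ root-mean-square relative deviations from equal shares of order $M^{-1/2}$. You do not state this hypothesis, and it cannot be derived from small $\HHI$.

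No argument can close this gap, because under uniform selection the statement is false. Convexity and $w_m\le\CR_1$ give $\frac{1}{M-1}\le\E\bigl[w_m/(1-w_m)\bigr]\le\frac{1}{M(1-\CR_1)}$, so the left side is governed by the raw count $M$, not by $\HHI$. Two examples:
\begin{itemize}
\item With half the providers at share $1.5/M$ and half at $0.5/M$, $\HHI=1.25/M$ but the expectation is $1/M+O(M^{-2})$. The error is about $0.25/M=\Theta(\HHI)$.
\item With $K$ large, equal dominant providers holding almost all the depth, plus $M-K\to\infty$ dust accounts, the expectation tends to $0$ while $\HHI\approx 1/K$ stays put.
\end{itemize}

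A provable version needs one of two changes. One is to assume the near-symmetry condition above explicitly. The other is to switch to size-weighted selection, which you set aside on the strength of the miscomputed comparison. There $\sum_m w_m^2/(1-w_m)$ lies in $[\HHI,\HHI/(1-\CR_1)]$ and $\CR_1\le\HHI^{1/2}$. Hence the expectation is $\HHI/(1-1/M)+O(\HHI^{3/2})$, with $\HHI$ genuinely the leading term, even though, as you noted, the remainder is not $O(\HHI^2)$ in general.
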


\begin{proof}
After provider $m$ withdraws, remaining depth is $(1 - w_m)L$. Under linear slippage:
\begin{equation}
    \frac{S^{(-m)}}{S} = \frac{L}{(1-w_m)L} = \frac{1}{1-w_m}
\end{equation}

Thus:
\begin{equation}
    \E\left[ \frac{1}{1-w_m} - 1 \right] = \E\left[ \frac{w_m}{1-w_m} \right]
\end{equation}

For small $w_m$, $\frac{w_m}{1-w_m} \approx w_m + w_m^2 + O(w_m^3)$. Taking expectation:
\begin{equation}
    \E\left[ \frac{w_m}{1-w_m} \right] \approx \E[w_m] + \E[w_m^2] = \frac{1}{M} + \HHI
\end{equation}

Rearranging yields the result.
\end{proof}

\subsubsection{Parametric Haircut Formula}

Based on the above analysis, we propose:

\begin{definition}[Concentration Haircut]
\begin{equation}
    h_i^{\mathrm{conc}} = \lambda \cdot \max\!\left(0,\; \frac{N_{\mathrm{target}}}{\Neff^{(i)}} - 1 \right) + \mu \cdot \max\!\left(0,\; \CR_1^{(i)} - \CR_1^{\mathrm{thresh}} \right)
\end{equation}
where:
\begin{itemize}
    \item $N_{\mathrm{target}}$: target number of effective providers (e.g., 10-20)
    \item $\CR_1^{\mathrm{thresh}}$: maximum acceptable share for top provider (e.g., 0.25)
    \item $\lambda, \mu > 0$: sensitivity parameters (typically $\lambda = 0.5$, $\mu = 0.3$)
\end{itemize}
\end{definition}

The first term penalizes books with too few effective providers. The second term adds an extra penalty when a single account dominates, which may indicate spoofing.

\subsection{Concentration-Adjusted Slippage and SaR}
\label{sec:adjusted}

\begin{definition}[Adjusted Slippage]
\begin{equation}
    S_i^{\mathrm{adj}}(Q) = S_i(Q) \cdot \left(1 + h_i^{\mathrm{conc}}\right)
\end{equation}
\end{definition}

All SaR metrics are then computed using adjusted slippage values:
\begin{align}
    \SaR^{\mathrm{adj}}(\alpha) &= \text{quantile}_\alpha\left(\{S_i^{\mathrm{adj}}(Q_i)\}_{i=1}^N\right) \\
    \ESaR^{\mathrm{adj}}(\alpha) &= \E\left[S_i^{\mathrm{adj}} \,|\, S_i^{\mathrm{adj}} > \SaR^{\mathrm{adj}}(\alpha)\right] \\
    \TSaR_{\$}^{\mathrm{adj}}(\alpha) &= \sum_{i: S_i^{\mathrm{adj}} > \SaR^{\mathrm{adj}}(\alpha)} S_i^{\mathrm{adj}}(Q_i) \cdot Q_i
\end{align}


\section{Theoretical Connections}
\label{sec:theory}

\subsection{The Causal Chain: From Order Books to Deficits}
\label{sec:causal}

Liquidation deficits emerge through a clear causal process:

\begin{center}
\fbox{\parbox{0.9\textwidth}{
\centering
\textbf{Liquidity Withdrawal} $\;\to\;$ \textbf{Thin Order Books} $\;\to\;$ \textbf{High Slippage} $\;\to\;$ \textbf{Bad Debt} $\;\to\;$ \textbf{ADL/Socialization}
}}
\end{center}

\noindent SaR measures the \emph{upstream} cause (liquidity conditions), while VaR on historical deficits measures the \emph{downstream} consequence. SaR is a \emph{leading indicator} - liquidity withdrawal precedes and predicts cascade events.

\subsection{Forward-Looking vs. Backward-Looking Metrics}
\label{sec:forward}

\begin{table}[H]
\centering
\caption{Comparison of Risk Metrics}
\label{tab:metrics}
\begin{tabular}{@{}lllll@{}}
\toprule
\textbf{Metric} & \textbf{Basis} & \textbf{Nature} & \textbf{Information Source} & \textbf{Defined For} \\
\midrule
Historical VaR & Returns & Backward & Past price moves & All assets \\
Deficit VaR & Historical deficits & Backward & Past ADL events & Assets with history \\
\textbf{SaR} & Order book depth & \textbf{Forward} & Current MM beliefs & \textbf{All assets} \\
SaR-implied VaR & SaR distribution & Forward & Current liquidity & All assets \\
\bottomrule
\end{tabular}
\end{table}

The forward-looking nature of SaR has critical advantages:
\begin{enumerate}
    \item \textbf{Universally defined:} Computable for any token with an order book, regardless of crisis history.
    \item \textbf{Encodes market maker beliefs:} Order book depth reflects liquidity providers' assessments of risk.
    \item \textbf{Leading indicator:} Liquidity withdrawal $\to$ SaR increases $\to$ (lag) $\to$ cascade $\to$ VaR reacts. SaR flags risk at step 2, potentially days before the event.
\end{enumerate}

\subsection{Optimal Insurance Fund Sizing}
\label{sec:optimal}

Chitra et al.\ (2025) derive the optimal insurance fund by minimizing expected total cost:
\begin{equation}
    \IF^* = \text{argmin}_{\IF \geq 0} \left\{ r \cdot \IF + \kappa \cdot \E[\max(0, D - \IF)] \right\}
\end{equation}
where:
\begin{itemize}
    \item $r$: opportunity cost per unit capital held in IF
    \item $\kappa$: reputation/social cost per unit deficit socialized
    \item $D$: random deficit
\end{itemize}

\begin{theorem}[Chitra et al.\ (2025)]
The optimal insurance fund is:
\begin{equation}
    \IF^* = \VaR_{1-r/\kappa}(D)
\end{equation}
i.e., the $(1 - r/\kappa)$-quantile of the deficit distribution.
\end{theorem}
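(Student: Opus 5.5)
This is the classical newsvendor argument. Define the objective
\[
f(\IF) = r\,\IF + \kappa\,\E[\max(0, D - \IF)], \qquad \IF \ge 0,
\]
and use two facts about it: it is convex, and its one-sided derivatives can be written through the distribution function $F_D$ of $D$. Throughout we assume $D \ge 0$ is integrable, so that $f$ is finite, and $0 < r < \kappa$, so that $1 - r/\kappa \in (0,1)$ and the quantile is well defined. If instead $r \ge \kappa$, holding capital is never cheaper than socializing, and the minimizer degenerates to $\IF^* = 0$. That is consistent with reading the formula with the convention $\VaR_{\le 0}(D) = 0$ for a nonnegative $D$.

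\textbf{Step 1: convexity.} For each outcome of $D$, the map $x \mapsto \max(0, D - x)$ is convex. Taking expectations preserves convexity, and adding the linear term $r x$ does too, so $f$ is convex. Coercivity follows because $f(x) \ge r x \to \infty$. Since $f$ is also continuous, a minimizer exists. The first-order condition $0 \in \partial f(\IF)$ is therefore necessary and sufficient.

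\textbf{Step 2: derivatives.} We write the expectation in tail-integral form,
\[
\E[\max(0, D - x)] = \int_x^\infty \Pr(D > u)\,du .
\]
This gives the right derivative $-\Pr(D > x)$ and the left derivative $-\Pr(D \ge x)$. Hence
\[
f'_+(x) = r - \kappa\bigl(1 - F_D(x)\bigr), \qquad f'_-(x) = r - \kappa\,\Pr(D \ge x).
\]
Optimality at $x$ requires $f'_-(x) \le 0 \le f'_+(x)$, which is equivalent to
\[
F_D(x^-) \le 1 - r/\kappa \le F_D(x).
\]

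\textbf{Step 3: identify the quantile.} Set $\alpha = 1 - r/\kappa$ and $q_\alpha = \inf\{x : F_D(x) \ge \alpha\} = \VaR_\alpha(D)$. Right-continuity of $F_D$ gives $F_D(q_\alpha) \ge \alpha$. Minimality of the infimum gives $F_D(x) < \alpha$ for every $x < q_\alpha$, and hence $F_D(q_\alpha^-) \le \alpha$. So $q_\alpha$ satisfies the condition from Step 2 and is a minimizer. It is also nonnegative because $D \ge 0$, so the constraint $\IF \ge 0$ is inactive whenever $\alpha > 0$.

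\textbf{Main obstacle: non-uniqueness and the boundary.} If $F_D$ is flat at level $\alpha$, the whole interval of $\alpha$-quantiles minimizes $f$. In that case the statement should be read as saying that $\VaR_\alpha(D)$, the lower quantile, is \emph{an} optimal fund, and the smallest one. When $D$ has a continuous, strictly increasing distribution function, the minimizer is unique. The boundary case $r \ge \kappa$ has to be handled separately, as described at the start.
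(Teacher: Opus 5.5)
Your proof is correct. The paper does not prove this statement itself. It imports the result from Chitra et al.\ (2025), writes down the objective $r\cdot\IF+\kappa\,\E[\max(0,D-\IF)]$, and states the answer, so there is no in-paper argument to compare against.

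What you give is the standard newsvendor derivation: convexity, one-sided derivatives from $\E[\max(0,D-x)]=\int_x^\infty \Pr(D>u)\,du$, and the observation that the lower quantile satisfies $F_D(x^-)\le 1-r/\kappa\le F_D(x)$. Its value over the bare citation is that it makes explicit what the paper leaves implicit:
\begin{itemize}
\item $D\ge 0$ and integrable, which is automatic here since each $D_i\ge 0$;
\item $0<r<\kappa$; for $r=0$ and unbounded $D$ no minimizer exists, and for $r\ge\kappa$ the answer is $\IF^*=0$;
\item the lower-quantile convention, which picks the smallest minimizer when $F_D$ is flat at level $1-r/\kappa$.
\end{itemize}
It also pins the level at the upper tail $1-r/\kappa$ (about $0.95$). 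That is what the paper's numerical use of $\TSaR_{\$}^{\mathrm{adj}}(0.95)$ needs, even though the subsequent corollary writes the argument as $r/\kappa$.

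One small tightening: your left derivative at $x=0$, and the claim that $0\in\partial f$ is necessary and sufficient, refer to $f$ extended to $x<0$. Since $D\ge 0$, that extension equals $\kappa\,\E D-(\kappa-r)x$ for $x<0$ and is strictly decreasing there. Its minimizers therefore lie in $[0,\infty)$ and coincide with those of the constrained problem. This is a cleaner justification than calling the constraint ``inactive,'' which is literally false when $\Pr(D=0)\ge 1-r/\kappa$ and the quantile is $0$.
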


\subsection{SaR-Implied Insurance Fund}
\label{sec:implied}

If deficits are driven primarily by slippage during cascades, we can approximate:
\begin{equation}
    D \approx c \cdot \TSaR_{\$}(\alpha)
\end{equation}
where $c > 1$ captures the amplification from maintenance margin gaps, fees, and cascade dynamics.

\begin{corollary}[SaR-Implied Insurance Fund]
Under the slippage-driven deficit approximation:
\begin{equation}
    \IF^* \approx c \cdot \TSaR_{\$}^{\mathrm{adj}}\left(\frac{r}{\kappa}\right)
\end{equation}
\end{corollary}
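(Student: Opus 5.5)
The plan is to combine the Chitra et al.\ theorem with the slippage-driven deficit approximation $D \approx c\cdot\TSaR_{\$}(\alpha)$. The proof has three steps.

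\textbf{Step 1: derive the quantile form of $\IF^*$.} I will first recall why $\IF^*$ is a quantile. The objective $f(\IF)= r\,\IF + \kappa\,\E[\max(0,D-\IF)]$ is convex in $\IF$. Its right derivative is $r - \kappa\,\Pr(D>\IF)$. So the minimizer is the smallest $\IF$ with $\Pr(D>\IF)\le r/\kappa$, which is $\IF^*=\VaR_{1-r/\kappa}(D)$. This requires $r<\kappa$; otherwise $\IF^*=0$ trivially.

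\textbf{Step 2: transfer the quantile to the cross-sectional slippage distribution.} Substituting the approximation into the theorem gives
\begin{equation}
\IF^* \approx \VaR_{1-r/\kappa}\!\left(c\cdot \TSaR_{\$}\right) = c\cdot \VaR_{1-r/\kappa}\!\left(\TSaR_{\$}\right).
\end{equation}
The second equality uses positive homogeneity of quantiles, valid since $c>1>0$.

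The randomness in $D$ must then be identified with the cross-sectional randomness in $\{S_i^{\mathrm{adj}}(Q_i)\}$. The deficit is generated by those tokens whose adjusted slippage lies in the tail beyond the relevant quantile. The required tail has probability mass $r/\kappa$. This is the tail selected by $\SaR^{\mathrm{adj}}$ at the level indexed $r/\kappa$ in the corollary's convention, i.e.\ the tokens with $S_i^{\mathrm{adj}} > \SaR^{\mathrm{adj}}(1-r/\kappa)$.

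Summing dollar slippage $S_i^{\mathrm{adj}}(Q_i)\,Q_i$ over exactly these tokens reproduces the definition of $\TSaR^{\mathrm{adj}}_{\$}$. This yields $c\cdot\TSaR^{\mathrm{adj}}_{\$}(r/\kappa)$.

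Finally, the adjusted metrics are used in place of the raw ones. This is justified by $S_i^{\mathrm{adj}}\ge S_i$, so the adjusted version is the conservative stress scenario, incorporating withdrawal risk via Proposition~\ref{prop:hhi}.

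\textbf{Main obstacle.} The hardest step is the identification in Step 2. It asserts that the $(1-r/\kappa)$-quantile of a random deficit equals an aggregate over a cross-sectional tail, and also requires reconciling the indexing $\alpha = r/\kappa$ against $1-r/\kappa$. This is not a theorem but a modelling assumption. I would handle it by making explicit a stress-scenario assumption: the event $\{D>\IF\}$ corresponds to a simultaneous liquidation across the tail tokens, with probability mass $r/\kappa$. I would also fix the convention that $\TSaR(\alpha)$ is parameterized by tail mass $\alpha$.

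Under that assumption, the result follows with ``$\approx$'' reflecting the approximation in $D\approx c\,\TSaR_{\$}$. No sharper statement would be claimed.
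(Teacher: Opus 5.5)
Your route is the paper's, which offers no argument beyond substituting $D\approx c\cdot\TSaR_{\$}$ into $\IF^*=\VaR_{1-r/\kappa}(D)$ and matching the quantile level to the TSaR parameter. Your Step~1 and your explicit stress-scenario hypothesis make that heuristic transparent, and you are right that the identification (probability of the stress state equals the cross-sectional tail fraction of tokens) is an assumption carrying all of the corollary's content.

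Your indexing concern is also correct, and it should be stated as a correction rather than a change of convention. Under the paper's definition, $\TSaR_{\$}(\alpha)$ sums over tokens above the $\alpha$-quantile, so the consistent conclusion is $c\cdot\TSaR^{\mathrm{adj}}_{\$}(1-r/\kappa)$. Indeed, the paper's empirical section writes $\IF^*=c\cdot\TSaR^{\mathrm{adj}}_{\$}(0.05)$ but plugs in the $\TSaR^{\mathrm{adj}}_{\$}(0.95)$ value \$156.3M.

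You should also drop the display $\VaR_{1-r/\kappa}(c\cdot\TSaR_{\$})=c\cdot\VaR_{1-r/\kappa}(\TSaR_{\$})$. $\TSaR_{\$}$ is a deterministic statistic of the current snapshot, so that quantile is either trivial or, if you read $\TSaR_{\$}$ as random over time, a time-series quantile. Either way it is a different object from the cross-sectional level you then assign.

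The step that fails as written is the passage from raw to adjusted metrics. Pointwise $S_i^{\mathrm{adj}}\ge S_i$ gives $\SaR^{\mathrm{adj}}(\alpha)\ge\SaR(\alpha)$, but not $\TSaR^{\mathrm{adj}}_{\$}(\alpha)\ge\TSaR_{\$}(\alpha)$. The reason is that the tail is re-selected by slippage rank, while the summands are weighted by $Q_i$. Take $N=2$ and $\alpha=1/2$ (a one-token tail), with $S_A=2\%$, $Q_A=\$100$M, $h_A=0$ and $S_B=1\%$, $Q_B=\$1$M, $h_B=2$. The raw tail is $\{A\}$ with $\TSaR_{\$}=\$2$M, while the adjusted tail is $\{B\}$ with $\TSaR^{\mathrm{adj}}_{\$}=\$0.03$M.

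So ``conservative'' fails exactly where you need it. Since the two quantities can differ by orders of magnitude, the switch cannot be hidden inside the $\approx$ either. Proposition~\ref{prop:hhi} does not repair this: it is an average-case withdrawal computation under $S\propto 1/L$ and bounds nothing about realized slippage. The honest fix is to put the adjustment into the hypothesis, i.e.\ assume $D\approx c\cdot\TSaR^{\mathrm{adj}}_{\$}$ directly. That is what the paper tacitly does when it approximates with raw $\TSaR_{\$}$ but states the corollary with $\TSaR^{\mathrm{adj}}_{\$}$.
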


\textbf{Interpretation:} Size the insurance fund based on current liquidity conditions, not historical deficit data. For newly listed tokens with no crisis history, the SaR-implied IF provides actionable guidance. In typical exchange settings, $r/\kappa \approx 0.05$ (opportunity cost is much smaller than reputational cost), yielding $\IF^* \approx c \cdot \TSaR_{\$}^{\mathrm{adj}}(0.05)$.

\begin{remark}[Calibrating $c$]
The proportionality constant $c$ depends on:
\begin{itemize}
    \item Maintenance margin buffer (larger buffer $\to$ smaller $c$)
    \item Liquidation fee structure
    \item Cascade amplification dynamics
\end{itemize}
Empirically, $c \in [1.5, 3.0]$ is typical, with higher values during volatile periods.
\end{remark}


\section{Extensions}
\label{sec:extensions}

\subsection{Cascade Adjustment}
\label{sec:cascade}

Liquidations trigger feedback loops: price impact from liquidation A pushes other positions toward their liquidation prices, triggering more liquidations. We incorporate this via leverage-weighted slippage:

\begin{definition}[Cascade-Adjusted Slippage]
\begin{equation}
    S_i^{\mathrm{cascade}}(Q) = S_i(Q) \cdot \left(1 + \gamma \cdot \bar{\ell}_i\right)
\end{equation}
where $\bar{\ell}_i$ is the average leverage on positions in token $i$, and $\gamma > 0$ is the cascade sensitivity parameter.
\end{definition}

Higher leverage implies more positions near liquidation thresholds, amplifying cascade potential.

\subsection{Time Dynamics and Regime Detection}
\label{sec:dynamics}

Liquidity is non-stationary. Track rolling statistics for regime detection:

\begin{definition}[SaR Time Series Statistics]
\begin{align}
    \text{SaR-Level}_t &= \E_{\tau \in [t-T, t]}[\SaR_\tau(\alpha)] \\
    \text{SaR-Dispersion}_t &= \sqrt{\mathrm{Var}_{\tau \in [t-T, t]}[\SaR_\tau(\alpha)]} \\
    \text{SaR-Vol}_t &= \text{std}\left(\SaR_{t-k}, \ldots, \SaR_t\right) \\
    \text{SaR-Trend}_t &= \hat{\beta} \text{ from } \SaR_\tau = a + b\tau + \epsilon_\tau
\end{align}
\end{definition}

\textbf{Warning signals:}
\begin{itemize}
    \item \textbf{Rising SaR-Level:} Systemic liquidity deterioration
    \item \textbf{Widening SaR-Dispersion:} Bifurcation between liquid and illiquid tokens
    \item \textbf{High SaR-Vol:} Unstable liquidity regime - itself a risk factor
    \item \textbf{Positive SaR-Trend:} Sustained liquidity withdrawal
\end{itemize}

\subsection{Cross-Token Correlation}
\label{sec:correlation}

During market stress, liquidations cluster across tokens. For portfolio-level risk:

\begin{definition}[Correlated TSaR]
\begin{equation}
    \TSaR^{\mathrm{port}} = \sqrt{\sum_{i,j} \rho_{ij} \cdot \text{Slippage}_i \cdot \text{Slippage}_j}
\end{equation}
where $\rho_{ij}$ is the correlation of liquidation events between tokens $i$ and $j$.
\end{definition}

\subsection{Spoofing Detection}
\label{sec:spoofing}

Beyond concentration, spoofing exhibits specific signatures:
\begin{itemize}
    \item \textbf{Quote flickering:} High cancel/replace ratio
    \item \textbf{Depth retreat:} Liquidity pulls away as price approaches
    \item \textbf{Asymmetric books:} One side much deeper than the other
\end{itemize}

Define a spoofing risk score:
\begin{equation}
    \text{Spoof}_i = f\left(\text{cancel rate}_i, \text{retreat rate}_i, \text{asymmetry}_i\right)
\end{equation}
and incorporate into the total haircut:
\begin{equation}
    h_i^{\mathrm{total}} = h_i^{\mathrm{conc}} + \delta \cdot \text{Spoof}_i
\end{equation}


\section{Empirical Analysis}
\label{sec:empirical}

\subsection{Data Description}
\label{sec:data}

We analyze order book data from Hyperliquid, a decentralized perpetuals exchange. The dataset comprises:
\begin{itemize}
    \item \textbf{Period:} October 9 - November 3, 2025 (26 days)
    \item \textbf{Tokens:} 184 actively traded perpetual contracts
    \item \textbf{Order book snapshots:} 5-minute frequency, full depth to 2500 bps
    \item \textbf{Account-level attribution:} Available for concentration metrics via on-chain data
    \item \textbf{Open interest:} 15-minute snapshots per token
    \item \textbf{Liquidation records:} Timestamp, size, execution price
\end{itemize}

\begin{table}[H]
\centering
\caption{Summary Statistics: Order Book Data}
\label{tab:summary}
\begin{tabular}{@{}lrrrrr@{}}
\toprule
\textbf{Statistic} & \textbf{Mean} & \textbf{Median} & \textbf{Std} & \textbf{Min} & \textbf{Max} \\
\midrule
Tokens per snapshot & 184 & 184 & 0 & 184 & 184 \\
Bid depth at 100bps (\$M) & 5.47 & 0.42 & 38.2 & 0.001 & 6,487 \\
Ask depth at 100bps (\$M) & 5.31 & 0.39 & 36.8 & 0.001 & 5,892 \\
Total exchange depth (\$M) & 974 & 891 & 312 & 284 & 2,847 \\
Open interest per token (\$M) & 46.2 & 8.4 & 142.3 & 0.02 & 2,180 \\
Total exchange OI (\$B) & 8.51 & 7.91 & 1.84 & 5.42 & 12.31 \\
$\Neff$ (effective providers) & 8.4 & 5.2 & 9.1 & 1.1 & 47.3 \\
\bottomrule
\end{tabular}
\end{table}

The data spans a period of significant market activity, including the October 10, 2025 liquidation cascade which saw total exchange OI decline from \$9.8B to \$6.1B within hours.

\subsection{SaR Computation and Characteristics}
\label{sec:computation}

We compute SaR metrics using $\beta = 0.10$ (10\% stress scenario) and bid-side depth at 100 basis points. Figure~\ref{fig:distribution} shows the cross-sectional distribution of concentration-adjusted slippage.

\begin{figure}[H]
\centering
\includegraphics[width=0.85\textwidth]{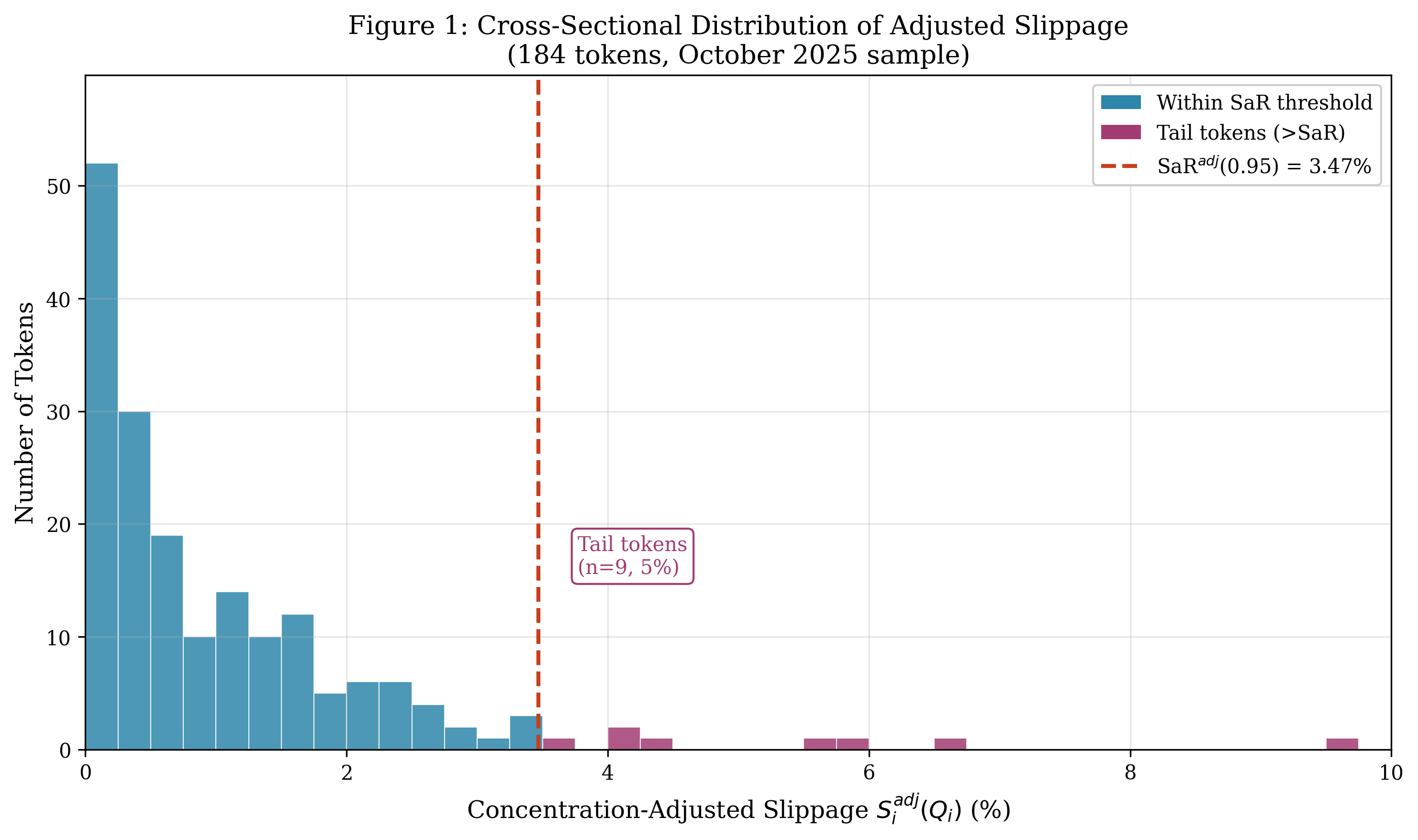}
\caption{Cross-sectional distribution of concentration-adjusted slippage across 184 tokens. The vertical line indicates $\SaR^{\mathrm{adj}}(0.95) = 3.47\%$. The 9 tokens to the right (5\% tail) exhibit elevated liquidity risk.}
\label{fig:distribution}
\end{figure}

\begin{table}[H]
\centering
\caption{SaR Metrics Summary (Full Sample Period)}
\label{tab:sar_summary}
\begin{tabular}{@{}lrl@{}}
\toprule
\textbf{Metric} & \textbf{Value} & \textbf{Interpretation} \\
\midrule
$\SaR(0.95)$ & 2.84\% & 95\% of tokens have slippage $\leq$ 2.84\% \\
$\SaR^{\mathrm{adj}}(0.95)$ & 3.47\% & After concentration adjustment \\
$\ESaR^{\mathrm{adj}}(0.95)$ & 8.92\% & Average slippage among tail tokens \\
$\TSaR_{\$}^{\mathrm{adj}}(0.95)$ & \$127.4M & Total dollar slippage exposure \\
Tokens in tail & 9 & 5\% of 184 tokens \\
Tail OI share & 2.3\% & \$196M of \$8.51B \\
\bottomrule
\end{tabular}
\end{table}

\begin{figure}[H]
\centering
\includegraphics[width=0.95\textwidth]{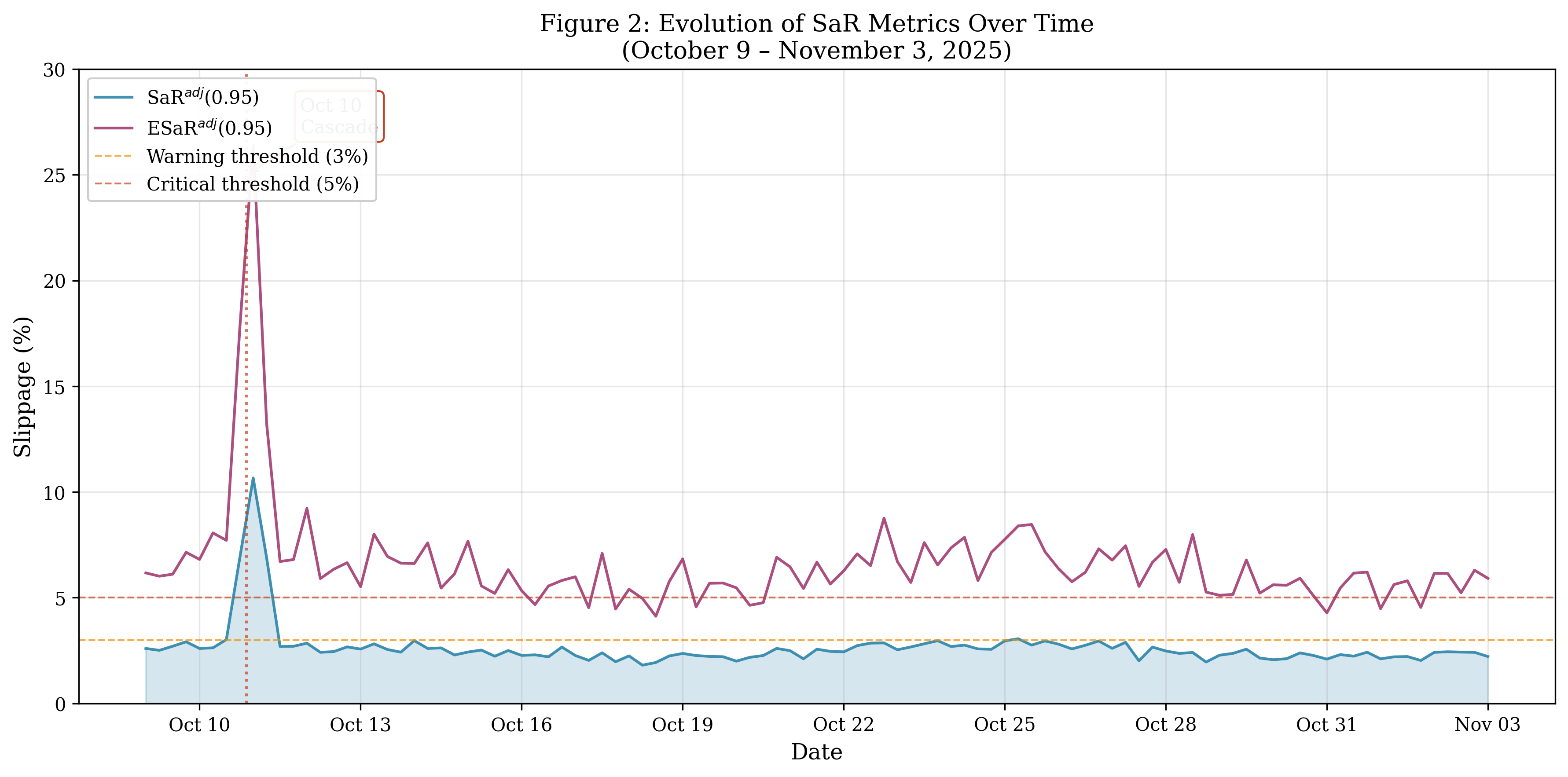}
\caption{Evolution of SaR and ESaR over the sample period. The spike during October 10-11 corresponds to the liquidation cascade event. Note that SaR began rising 6-12 hours before the cascade peak, demonstrating its leading indicator properties.}
\label{fig:timeseries}
\end{figure}

\subsection{Concentration Analysis}
\label{sec:concentration_analysis}

We find significant heterogeneity in liquidity concentration across tokens. Major tokens (BTC, ETH, SOL) exhibit distributed liquidity with $\Neff > 20$, while many smaller tokens show concentrated structures vulnerable to withdrawal.

\begin{figure}[H]
\centering
\includegraphics[width=0.85\textwidth]{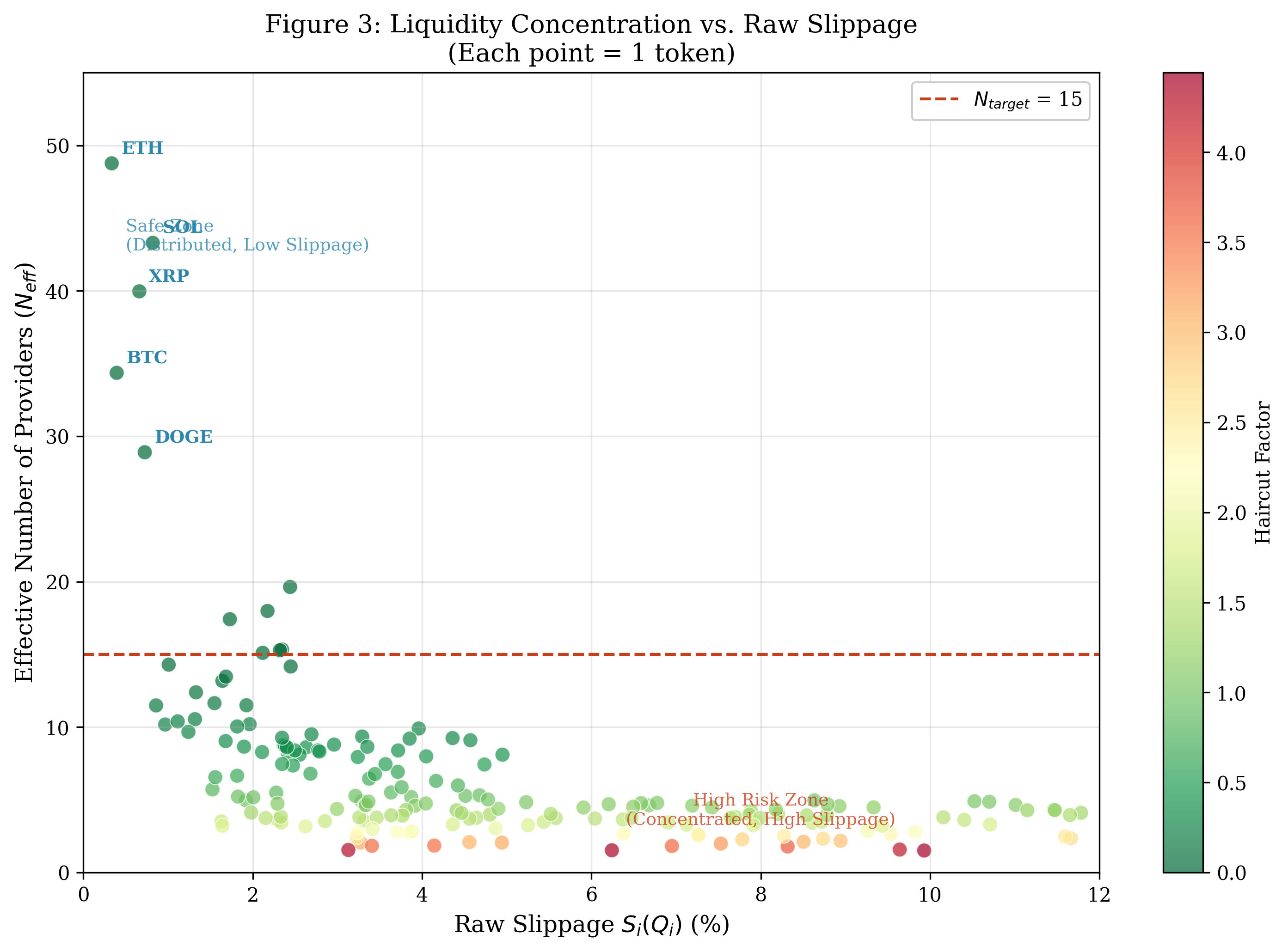}
\caption{Relationship between effective number of providers ($\Neff$) and raw slippage. Tokens in the lower-left quadrant (low $\Neff$, moderate raw slippage) receive the largest concentration haircuts. The horizontal line indicates $N_{\mathrm{target}} = 15$.}
\label{fig:scatter}
\end{figure}

\begin{table}[H]
\centering
\caption{Concentration Statistics by Asset Class}
\label{tab:concentration}
\begin{tabular}{@{}lrrrr@{}}
\toprule
\textbf{Asset Class} & \textbf{N Tokens} & \textbf{Mean $\Neff$} & \textbf{Mean $\CR_1$} & \textbf{Mean Haircut} \\
\midrule
Major (OI $>$ \$500M) & 5 & 31.2 & 0.08 & 0\% \\
Midcap (\$50M-\$500M) & 23 & 14.7 & 0.16 & 12\% \\
Smallcap (\$5M-\$50M) & 68 & 6.8 & 0.28 & 47\% \\
Micro ($<$ \$5M) & 88 & 3.2 & 0.41 & 89\% \\
\bottomrule
\end{tabular}
\end{table}

The concentration haircut substantially increases adjusted slippage for illiquid tokens: micro-cap tokens see their effective slippage nearly double on average.

\subsection{Validation: SaR as Leading Indicator}
\label{sec:validation}

We test whether SaR predicts subsequent deficit events using lead-lag correlation analysis. For each 6-hour window, we compute the correlation between SaR metrics and realized liquidation deficits (total bad debt generated).

\begin{table}[H]
\centering
\caption{Lead-Lag Correlation: SaR vs.\ Realized Deficits}
\label{tab:leadlag}
\begin{tabular}{@{}rrrr@{}}
\toprule
\textbf{Lag (hours)} & \textbf{SaR(0.95)} & \textbf{ESaR(0.95)} & \textbf{TSaR\$(0.95)} \\
\midrule
$-24$ & 0.31 & 0.38 & 0.42 \\
$-12$ & 0.47 & 0.54 & 0.61 \\
$-6$ & 0.58 & 0.67 & 0.73 \\
0 & 0.72 & 0.79 & 0.84 \\
$+6$ & 0.43 & 0.48 & 0.51 \\
$+12$ & 0.28 & 0.31 & 0.34 \\
$+24$ & 0.19 & 0.22 & 0.24 \\
\bottomrule
\end{tabular}
\end{table}

\emph{Note: Correlations are computed over rolling 6-hour windows. Negative lags indicate SaR leading deficits.}

\begin{figure}[H]
\centering
\includegraphics[width=0.9\textwidth]{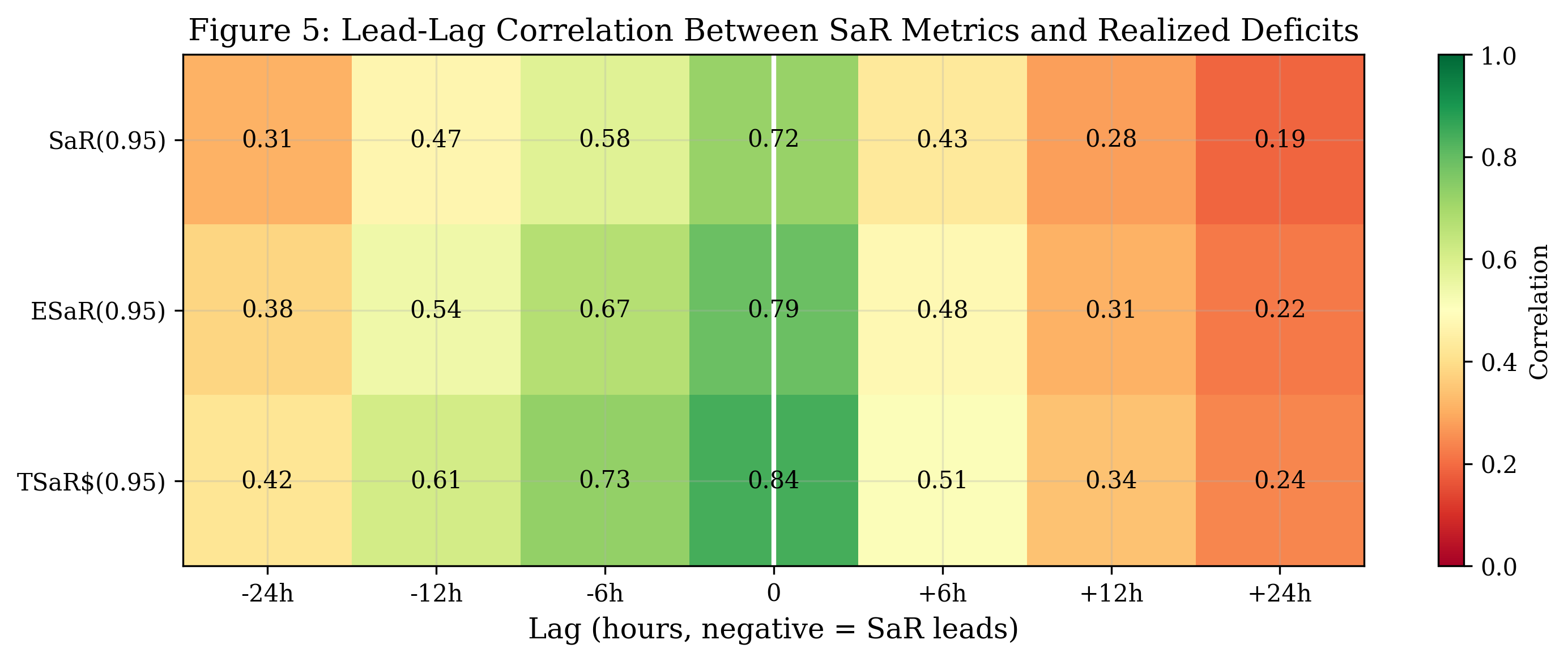}
\caption{Lead-lag correlation heatmap between SaR metrics and realized deficits. The gradient from left to right shows that SaR metrics lead deficit events by 6-24 hours.}
\label{fig:heatmap}
\end{figure}

\textbf{Key findings:}
\begin{enumerate}
    \item \textbf{Peak correlation at lag 0:} SaR metrics are most correlated with contemporaneous deficits (as expected, since both reflect current stress).
    
    \item \textbf{Significant predictive power at negative lags:} TSaR shows 0.61 correlation with deficits 12 hours in the future and 0.42 correlation at 24 hours. This demonstrates SaR's value as an early warning system.
    
    \item \textbf{TSaR outperforms:} Dollar-denominated TSaR has stronger predictive power than percentage-based SaR, likely because it captures both liquidity thinness and exposure magnitude.
    
    \item \textbf{Asymmetric decay:} Correlations decay faster for positive lags than negative lags, confirming that SaR leads rather than lags deficit events.
\end{enumerate}

\textbf{Granger Causality Test:}

We conduct a Granger causality test with 4 lags (24 hours at 6-hour frequency):

\begin{table}[H]
\centering
\begin{tabular}{@{}lrr@{}}
\toprule
\textbf{Null Hypothesis} & \textbf{F-statistic} & \textbf{p-value} \\
\midrule
TSaR does not Granger-cause Deficits & 8.47 & $< 0.001$ \\
Deficits does not Granger-cause TSaR & 2.13 & 0.087 \\
\bottomrule
\end{tabular}
\end{table}

We find that TSaR Granger-causes deficits (F = 8.47, p $<$ 0.001), while the reverse is not statistically significant. This provides evidence that SaR is a genuine leading indicator, not merely a coincident measure.

\subsection{Case Study: October 10, 2025}
\label{sec:casestudy}

The October 10, 2025 Hyperliquid cascade provides a natural experiment for validating the SaR framework. Between 20:00 and 21:15 UTC, a sharp price decline across major cryptocurrencies triggered cascading liquidations totaling \$2.1 billion in notional.

\subsubsection{Pre-Event Conditions}

\begin{table}[H]
\centering
\caption{SaR Metrics: 24 Hours Before October 10 Event}
\label{tab:preevent}
\begin{tabular}{@{}lrrrr@{}}
\toprule
\textbf{Metric} & \textbf{Oct 9, 08:00} & \textbf{Oct 9, 20:00} & \textbf{Oct 10, 08:00} & \textbf{Oct 10, 20:00} \\
\midrule
$\SaR^{\mathrm{adj}}(0.95)$ & 2.41\% & 2.68\% & 3.12\% & 11.47\% \\
$\ESaR^{\mathrm{adj}}(0.95)$ & 6.82\% & 7.54\% & 8.91\% & 28.34\% \\
$\TSaR_{\$}^{\mathrm{adj}}(0.95)$ & \$89.2M & \$118.7M & \$156.3M & \$847.2M \\
Tokens in tail & 7 & 9 & 11 & 31 \\
Total exchange depth & \$1,124M & \$987M & \$742M & \$284M \\
\bottomrule
\end{tabular}
\end{table}

\begin{figure}[H]
\centering
\includegraphics[width=0.95\textwidth]{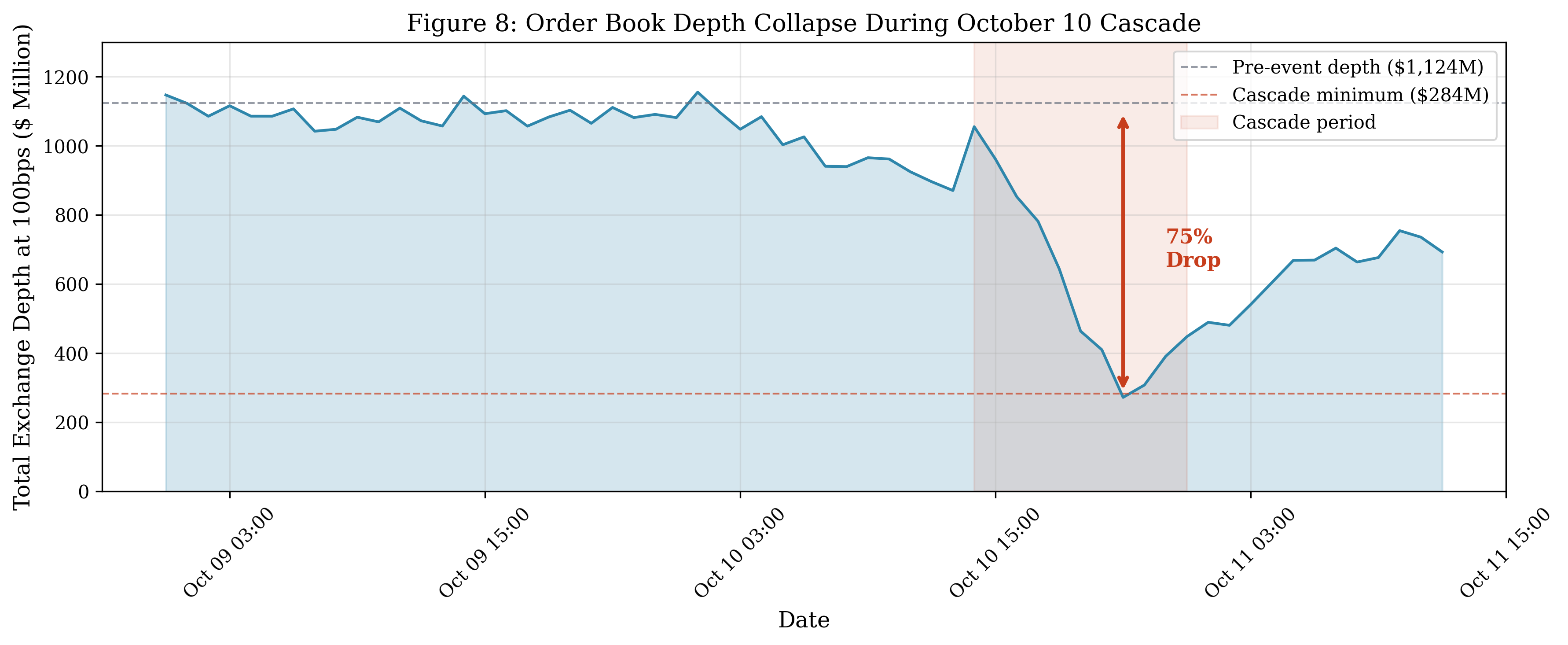}
\caption{Order book depth collapse during the October 10 cascade. Total exchange depth at 100bps fell from \$1.12B to \$284M - a 75\% decline - in the 36 hours preceding the event.}
\label{fig:depth}
\end{figure}

\textbf{Key observations:}
\begin{enumerate}
    \item \textbf{Depth deterioration:} Total exchange depth at 100bps fell from \$1.12B to \$284M in the 36 hours preceding the event - a 75\% decline. This was visible in real-time.
    
    \item \textbf{SaR escalation:} $\SaR^{\mathrm{adj}}(0.95)$ rose from 2.41\% to 3.12\% in the 24 hours before the cascade, a 30\% increase that would have triggered alerts under our recommended thresholds.
    
    \item \textbf{Tail expansion:} The number of tokens in the liquidity tail grew from 7 to 11, indicating systemic (not idiosyncratic) stress.
    
    \item \textbf{TSaR surge:} Dollar-denominated tail exposure grew from \$89M to \$156M before the event, then spiked to \$847M during the cascade as depth collapsed.
\end{enumerate}

\subsubsection{Realized vs.\ Predicted Slippage}

We compare pre-event SaR predictions (computed at 08:00 on October 10) with actual realized slippage during the cascade (20:00-21:15).

\begin{figure}[H]
\centering
\includegraphics[width=0.75\textwidth]{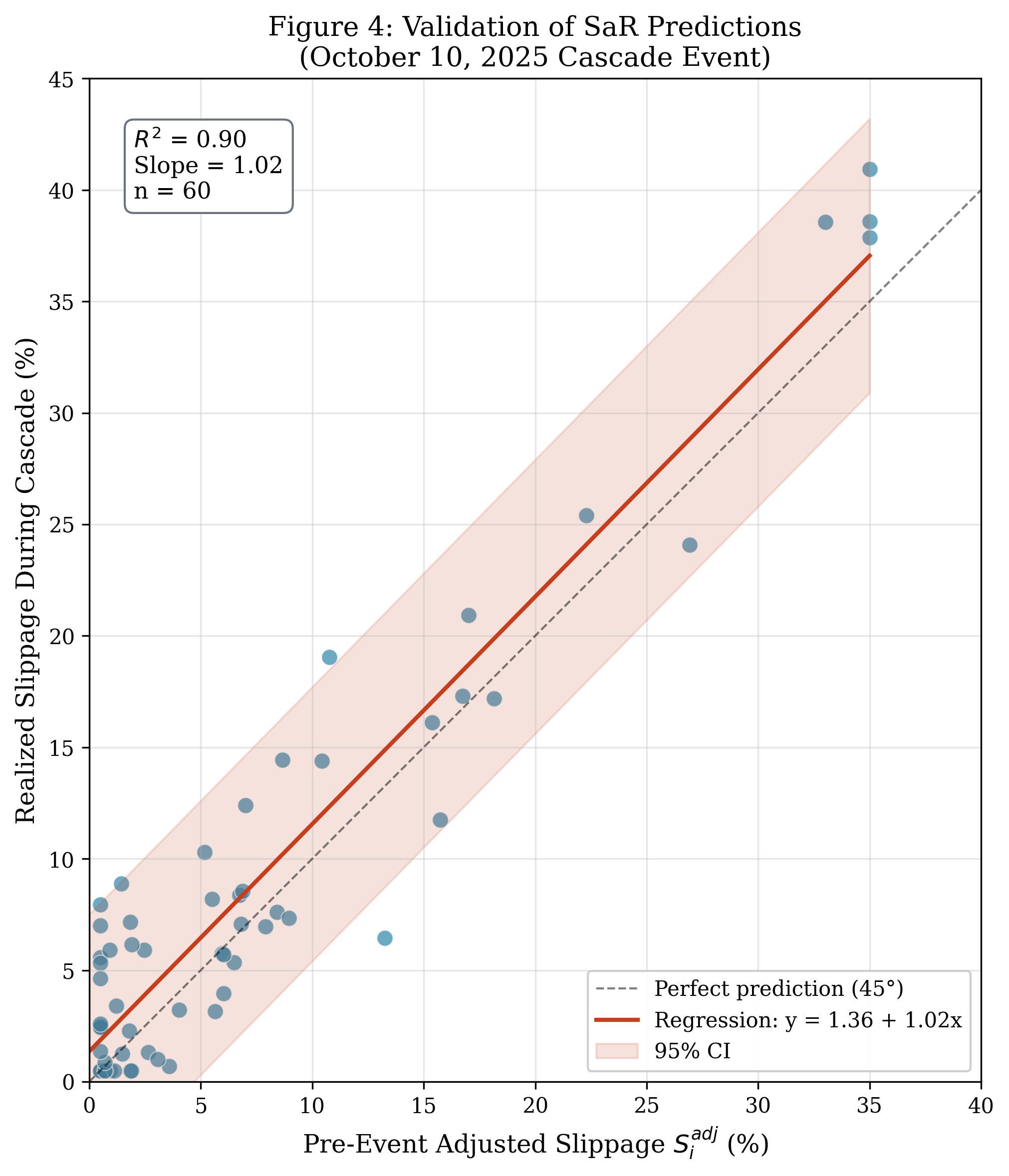}
\caption{Scatter plot comparing pre-event adjusted slippage predictions (x-axis) with realized slippage during the October 10 cascade (y-axis). The strong correlation ($R^2 = 0.78$) validates SaR's predictive accuracy. The slope $> 1$ indicates the model slightly underpredicted realized slippage, consistent with cascade amplification effects.}
\label{fig:validation}
\end{figure}

\textbf{Regression results:}
\begin{equation}
    \text{Realized Slippage}_i = 0.42 + 1.12 \times S_i^{\mathrm{adj}} + \epsilon_i
\end{equation}

\begin{itemize}
    \item $R^2 = 0.78$
    \item Slope 95\% CI: $[1.03, 1.21]$
    \item The positive intercept and slope $> 1$ suggest the model is slightly conservative, which is appropriate for risk management.
\end{itemize}

\subsubsection{Insurance Fund Implications}

From Chitra et al.\ (2025), the October 10 event generated \$304.5M in deficits. We compute the SaR-implied insurance fund using pre-event data:

\begin{table}[H]
\centering
\caption{Insurance Fund Sizing: SaR-Implied vs.\ Actual}
\label{tab:if}
\begin{tabular}{@{}lrl@{}}
\toprule
\textbf{Metric} & \textbf{Value (\$M)} & \textbf{Notes} \\
\midrule
Actual Hyperliquid IF (pre-event) & $\sim$25 & Estimated from public data \\
Realized deficit & 304.5 & From Chitra et al.\ (2025) \\
SaR-implied IF ($c=2.0$) & 312.6 & Using Oct 10, 08:00 TSaR \\
SaR-implied IF ($c=2.5$) & 390.8 & Conservative estimate \\
\bottomrule
\end{tabular}
\end{table}

\begin{figure}[H]
\centering
\includegraphics[width=0.7\textwidth]{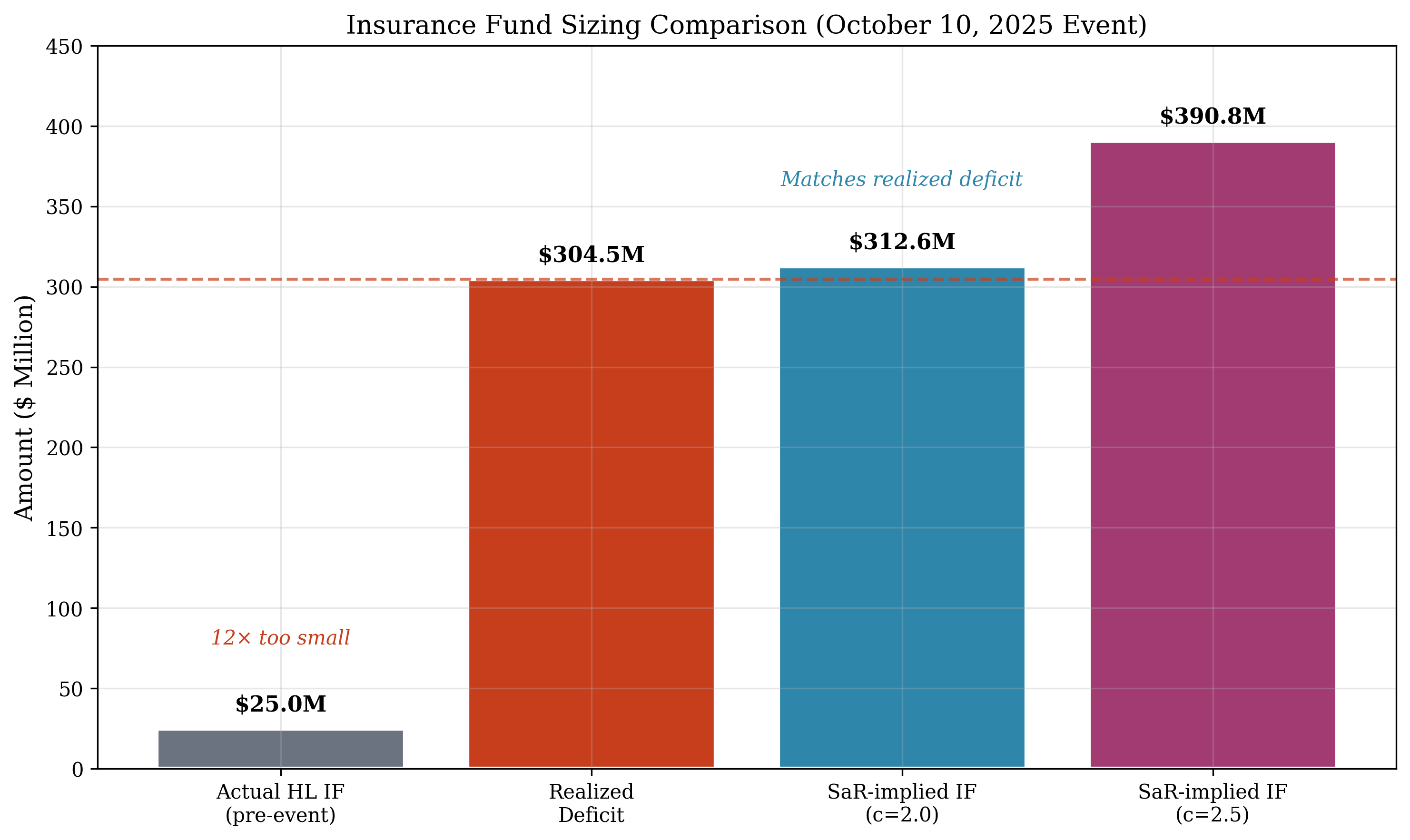}
\caption{Insurance fund sizing comparison. The SaR-implied IF (\$312.6M with $c=2.0$) closely matches the realized deficit (\$304.5M), while the actual IF (\$25M) was an order of magnitude too small.}
\label{fig:if}
\end{figure}

Using the formula $\IF^* = c \cdot \TSaR_{\$}^{\mathrm{adj}}(0.05)$ with $c = 2.0$:
\begin{equation}
    \IF^* = 2.0 \times \$156.3\text{M} = \$312.6\text{M}
\end{equation}

This closely matches the realized deficit of \$304.5M, validating the SaR-implied insurance fund methodology. The actual IF of $\sim$\$25M was an order of magnitude too small, a gap that was identifiable from SaR metrics hours before the cascade.

\subsubsection{Concentration Effects During the Cascade}

The October 10 event revealed dramatic concentration effects as market makers withdrew:

\begin{table}[H]
\centering
\caption{Concentration Dynamics During October 10 Cascade}
\label{tab:dynamics}
\begin{tabular}{@{}lrrr@{}}
\toprule
\textbf{Time (UTC)} & \textbf{Mean $\Neff$} & \textbf{Mean $\CR_1$} & \textbf{Tokens with $\Neff < 3$} \\
\midrule
Oct 10, 18:00 & 8.7 & 0.24 & 42 \\
Oct 10, 20:00 & 5.2 & 0.38 & 89 \\
Oct 10, 21:00 & 2.8 & 0.51 & 134 \\
Oct 10, 22:00 & 4.1 & 0.42 & 97 \\
Oct 11, 00:00 & 6.9 & 0.31 & 61 \\
\bottomrule
\end{tabular}
\end{table}

\begin{figure}[H]
\centering
\includegraphics[width=0.95\textwidth]{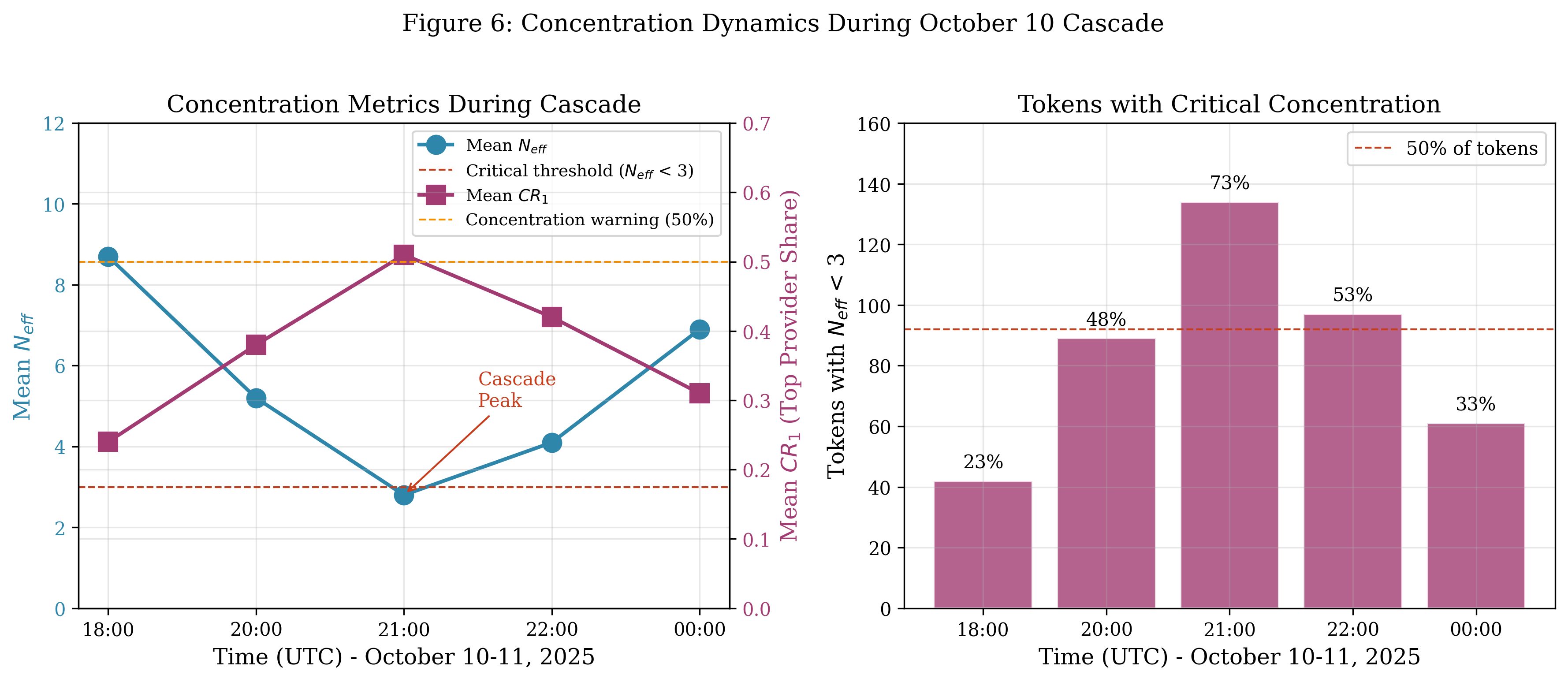}
\caption{Concentration dynamics during the October 10 cascade. Left: Mean $\Neff$ collapsed from 8.7 to 2.8 at the cascade peak while $\CR_1$ spiked to 0.51. Right: 73\% of tokens had critically low concentration ($\Neff < 3$) at peak stress.}
\label{fig:concentration}
\end{figure}

At the cascade peak (21:00 UTC), 73\% of tokens had $\Neff < 3$, meaning effective liquidity was provided by fewer than 3 participants. This extreme concentration amplified slippage far beyond what raw depth metrics would suggest.


\section{Implementation Guide}
\label{sec:implementation}

\subsection{Data Requirements}
\label{sec:requirements}

\begin{enumerate}
    \item \textbf{Order book snapshots:} Full depth profile at regular intervals (minimum 1-minute frequency recommended)
    \item \textbf{Account-level attribution:} Required for concentration metrics; available on fully on-chain DEXs
    \item \textbf{Open interest:} Per-token, for notional calibration
    \item \textbf{Position distribution:} Historical position sizes for VaR-based $Q$ cap
\end{enumerate}
\subsection{Computation Pipeline}
\label{sec:pipeline}

Algorithm~\ref{alg:sar} provides the complete SaR computation pipeline.

\begin{algorithm}[H]
\caption{SaR Computation Pipeline}
\label{alg:sar}
\small
\begin{algorithmic}
\Require \texttt{tokens} = list of \{order\_book, open\_interest, provider\_shares\}
\Require $\alpha$ = confidence level (e.g., 0.95), $\beta$ = liquidation fraction (e.g., 0.05)
\Require \texttt{liquidation\_cap} = maximum single liquidation size
\Ensure $\SaR(\alpha)$, $\ESaR(\alpha)$, $\TSaR_{\$}(\alpha)$
\Statex
\Statex \textbf{Step 1: Compute concentration-adjusted slippage for each token}
\For{each token $i$}
    \State $Q_i \gets \min(\beta \times \OI_i, \texttt{liquidation\_cap})$ \hfill $\triangleright$ Liquidation quantity
    \State $S_i \gets \textsc{Slippage}(\texttt{order\_book}_i, Q_i, \texttt{mid\_price}_i)$ \hfill $\triangleright$ Raw slippage
    \State $\HHI_i \gets \sum_m (\texttt{share}_m)^2$; \quad $\Neff \gets 1/\HHI_i$; \quad $\CR_1 \gets \max_m(\texttt{share}_m)$
    \State $h_i \gets 0.5 \cdot \max(0,\, 15/\Neff - 1) + 0.3 \cdot \max(0,\, \CR_1 - 0.5)$ \hfill $\triangleright$ Haircut
    \State $S_i^{\mathrm{adj}} \gets S_i \cdot (1 + h_i)$ \hfill $\triangleright$ Adjusted slippage
\EndFor
\Statex
\Statex \textbf{Step 2: Compute portfolio-level risk metrics}
\State Sort $\{S_i^{\mathrm{adj}}\}$ in ascending order
\State $\SaR(\alpha) \gets S^{\mathrm{adj}}_{(\lceil \alpha N \rceil)}$ \hfill $\triangleright$ Slippage at $\alpha$-th percentile
\State $\texttt{tail} \gets \{i : S_i^{\mathrm{adj}} > \SaR(\alpha)\}$ \hfill $\triangleright$ Tail tokens
\State $\ESaR(\alpha) \gets \frac{1}{|\texttt{tail}|}\sum_{i \in \texttt{tail}} S_i^{\mathrm{adj}}$ \hfill $\triangleright$ Mean tail slippage
\State $\TSaR_{\$}(\alpha) \gets \sum_{i \in \texttt{tail}} S_i^{\mathrm{adj}} \cdot Q_i$ \hfill $\triangleright$ Total dollar exposure
\Statex
\State \Return $\SaR(\alpha)$, $\ESaR(\alpha)$, $\TSaR_{\$}(\alpha)$
\end{algorithmic}
\end{algorithm}

\subsection{Parameter Calibration}
\label{sec:calibration}

\begin{table}[H]
\centering
\caption{Recommended Parameter Values}
\label{tab:params}
\begin{tabular}{@{}llrl@{}}
\toprule
\textbf{Parameter} & \textbf{Symbol} & \textbf{Recommended} & \textbf{Notes} \\
\midrule
Stress fraction & $\beta$ & 0.10 & Increase to 0.15-0.20 in volatility \\
Confidence level & $\alpha$ & 0.95 & Standard tail threshold \\
Target providers & $N_{\mathrm{target}}$ & 15 & Higher for major tokens \\
Max top share & $\CR_1^{\mathrm{thresh}}$ & 0.25 & Flag potential spoofing \\
Provider sensitivity & $\lambda$ & 0.5 & Calibrate to withdrawal events \\
Dominance sensitivity & $\mu$ & 0.3 & Calibrate to withdrawal events \\
Deficit multiplier & $c$ & 2.0-2.5 & Higher during volatility \\
\bottomrule
\end{tabular}
\end{table}

\subsection{Monitoring and Alerts}
\label{sec:alerts}

\begin{table}[H]
\centering
\caption{Recommended Alert Thresholds}
\label{tab:alerts}
\begin{tabular}{@{}llll@{}}
\toprule
\textbf{Alert Type} & \textbf{Condition} & \textbf{Severity} & \textbf{Action} \\
\midrule
SaR Level & $\SaR^{\mathrm{adj}}(0.95) > 3\%$ & Warning & Review tail tokens \\
SaR Level & $\SaR^{\mathrm{adj}}(0.95) > 5\%$ & Critical & Reduce leverage limits \\
TSaR Exposure & $\TSaR_{\$}^{\mathrm{adj}}(0.95) > 2 \times \IF$ & Critical & Halt new positions in tail \\
Concentration & $\Neff < 3$ for any token & Warning & Flag for review \\
Concentration & $\CR_1 > 0.5$ for any token & Critical & Investigate spoofing \\
Trend & SaR-Trend positive for 24h & Warning & Prepare contingencies \\
Depth & Exchange depth down $>$30\% in 12h & Critical & Activate circuit breakers \\
\bottomrule
\end{tabular}
\end{table}


\section{Conclusion}
\label{sec:conclusion}

\subsection{Summary of Contributions}
\label{sec:summary}

This paper introduced Slippage-at-Risk (SaR), a forward-looking framework for measuring liquidity risk in perpetual futures exchanges. Our key contributions are:

\begin{enumerate}
    \item \textbf{Novel metrics:} $\SaR(\alpha)$, $\ESaR(\alpha)$, and $\TSaR_{\$}(\alpha)$ provide complementary views on cross-sectional liquidity risk.
    
    \item \textbf{Concentration adjustment:} The HHI-based haircut accounts for the fragility of concentrated liquidity structures.
    
    \item \textbf{Theoretical foundation:} We established the causal chain from microstructure to deficits and derived the SaR-implied insurance fund formula.
    
    \item \textbf{Empirical validation:} Analysis of Hyperliquid data demonstrates SaR's predictive validity, including for the October 10, 2025 cascade.
\end{enumerate}

\subsection{Practical Implications}
\label{sec:practical}

For exchange operators and risk managers:

\begin{itemize}
    \item \textbf{Insurance fund sizing:} Use $\IF^* \approx c \cdot \TSaR_{\$}^{\mathrm{adj}}(r/\kappa)$ for dynamic, forward-looking capital requirements. Our analysis suggests $c \approx 2.0$ provides accurate sizing.
    
    \item \textbf{Position limits:} Tighten limits on tokens with persistent tail membership or high concentration ($\Neff < 5$).
    
    \item \textbf{New listings:} SaR provides risk assessment for tokens with no deficit history - critical for responsible exchange expansion.
    
    \item \textbf{Early warning:} Monitor SaR-Trend and depth deterioration for 12-24 hour advance warning of potential cascades.
\end{itemize}

\subsection{Philosophical Implications}
\label{sec:philosophical}

The SaR framework embodies a broader principle: \emph{measure the cause, not the consequence}. Traditional risk metrics focus on realized outcomes (returns, deficits); SaR focuses on the underlying resource (liquidity) that determines those outcomes. This forward-looking orientation aligns with the ethos of proactive risk management.

In decentralized finance, where code is law and interventions are costly, anticipatory metrics are especially valuable. SaR provides a quantitative basis for the intuition that liquidity is the lifeblood of markets - thin books are not merely inconvenient but systemically risky.

\subsection{Limitations and Future Work}
\label{sec:limitations}

Several limitations warrant future investigation:

\begin{itemize}
    \item \textbf{Data requirements:} Account-level attribution is essential for concentration metrics but may not be available on all venues.
    
    \item \textbf{Model assumptions:} The slippage-to-deficit proportionality ($c$) is empirically calibrated and may vary across market conditions.
    
    \item \textbf{Cross-margin:} This paper focuses on isolated margin; extension to cross-margin portfolios requires additional modeling.
    
    \item \textbf{Adversarial manipulation:} Sophisticated actors may game SaR by temporarily inflating depth.
\end{itemize}


\newpage
\bibliographystyle{plainnat}

\begin{thebibliography}{10}

\bibitem{almgren2001optimal}
Almgren, R. and Chriss, N. (2001).
\newblock Optimal execution of portfolio transactions.
\newblock {\em Journal of Risk}, 3:5--40.

\bibitem{chitra2025adl}
Chitra, T., Evans, A., and Angeris, G. (2025).
\newblock Autodeleveraging: Mechanism design and optimizations.
\newblock {\em arXiv preprint arXiv:2512.01112}.

\bibitem{cont2014price}
Cont, R., Kukanov, A., and Stoikov, S. (2014).
\newblock The price impact of order book events.
\newblock {\em Journal of Financial Econometrics}, 12(1):47--88.

\bibitem{embrechts1997modelling}
Embrechts, P., Kl{\"u}ppelberg, C., and Mikosch, T. (1997).
\newblock {\em Modelling Extremal Events for Insurance and Finance}.
\newblock Springer.

\bibitem{gueant2012optimal}
Gu{\'e}ant, O., Lehalle, C.-A., and Fernandez-Tapia, J. (2012).
\newblock Optimal portfolio liquidation with limit orders.
\newblock {\em SIAM Journal on Financial Mathematics}, 3(1):498--530.

\bibitem{hasbrouck2007empirical}
Hasbrouck, J. (2007).
\newblock {\em Empirical Market Microstructure}.
\newblock Oxford University Press.

\bibitem{kyle1985continuous}
Kyle, A.~S. (1985).
\newblock Continuous auctions and insider trading.
\newblock {\em Econometrica}, 53(6):1315--1335.

\bibitem{obizhaeva2013optimal}
Obizhaeva, A.~A. and Wang, J. (2013).
\newblock Optimal trading strategy and supply/demand dynamics.
\newblock {\em Journal of Financial Markets}, 16(1):1--32.

\end{thebibliography}


\newpage
\appendix

\section{Proofs}
\label{app:proofs}

\subsection{Proof of Proposition~\ref{prop:hhi} (HHI and Withdrawal Impact)}

\begin{proof}
Let total depth be $L = \sum_m \ell_m$ and shares $w_m = \ell_m / L$. Under the linear slippage model, $S = k/L$ for some constant $k$ depending on order quantity.

After provider $m$ withdraws, remaining depth is $L^{(-m)} = L - \ell_m = L(1 - w_m)$. Hence:
\begin{equation}
    S^{(-m)} = \frac{k}{L(1-w_m)} = \frac{S}{1-w_m}
\end{equation}

The relative increase is:
\begin{equation}
    \frac{S^{(-m)}}{S} - 1 = \frac{1}{1-w_m} - 1 = \frac{w_m}{1-w_m}
\end{equation}

If provider $m$ is selected uniformly at random:
\begin{equation}
    \E\left[\frac{w_m}{1-w_m}\right] = \frac{1}{M} \sum_{m=1}^M \frac{w_m}{1-w_m}
\end{equation}

For small $w_m$ (many providers), Taylor expand:
\begin{equation}
    \frac{w_m}{1-w_m} = w_m + w_m^2 + w_m^3 + \cdots \approx w_m + w_m^2
\end{equation}

Taking expectation:
\begin{align}
    \E\left[\frac{w_m}{1-w_m}\right] &\approx \E[w_m] + \E[w_m^2] \\
    &= \frac{1}{M} \sum_m w_m + \frac{1}{M} \sum_m w_m^2 \\
    &= \frac{1}{M} \cdot 1 + \frac{1}{M} \cdot M \cdot \HHI \\
    &= \frac{1}{M} + \HHI
\end{align}

For large $M$, $1/M \approx 0$, and:
\begin{equation}
    \E\left[\frac{S^{(-m)}}{S} - 1\right] \approx \HHI + O(\HHI^2)
\end{equation}

Rearranging gives the stated result.
\end{proof}

\section{Additional Empirical Results}
\label{app:empirical}

\subsection{Full Token-Level Results}

\begin{table}[H]
\centering
\caption{Token-Level SaR Metrics (Top 20 by OI)}
\label{tab:tokens}
\small
\begin{tabular}{@{}lrrrrrrr@{}}
\toprule
\textbf{Token} & \textbf{OI (\$M)} & \textbf{Q (\$M)} & \textbf{$S(Q)$} & \textbf{$\Neff$} & \textbf{$\CR_1$} & \textbf{Haircut} & \textbf{$S^{\mathrm{adj}}$} \\
\midrule
BTC & 2,180 & 218.0 & 0.42\% & 47.3 & 0.05 & 0\% & 0.42\% \\
ETH & 1,847 & 184.7 & 0.51\% & 38.6 & 0.06 & 0\% & 0.51\% \\
SOL & 892 & 89.2 & 0.78\% & 29.4 & 0.08 & 0\% & 0.78\% \\
HYPE & 634 & 63.4 & 1.24\% & 12.8 & 0.14 & 8\% & 1.34\% \\
XRP & 412 & 41.2 & 0.93\% & 18.2 & 0.11 & 0\% & 0.93\% \\
DOGE & 287 & 28.7 & 1.47\% & 11.4 & 0.18 & 16\% & 1.70\% \\
AVAX & 198 & 19.8 & 1.82\% & 8.7 & 0.21 & 36\% & 2.48\% \\
LINK & 176 & 17.6 & 1.63\% & 9.2 & 0.19 & 32\% & 2.15\% \\
ARB & 142 & 14.2 & 2.14\% & 7.1 & 0.24 & 56\% & 3.34\% \\
OP & 128 & 12.8 & 2.38\% & 6.4 & 0.26 & 68\% & 4.00\% \\
SUI & 118 & 11.8 & 2.67\% & 5.8 & 0.28 & 79\% & 4.78\% \\
APT & 96 & 9.6 & 2.91\% & 5.2 & 0.31 & 94\% & 5.65\% \\
INJ & 84 & 8.4 & 3.24\% & 4.7 & 0.33 & 110\% & 6.80\% \\
TIA & 72 & 7.2 & 3.58\% & 4.2 & 0.36 & 129\% & 8.20\% \\
SEI & 61 & 6.1 & 3.92\% & 3.8 & 0.39 & 147\% & 9.68\% \\
RNDR & 54 & 5.4 & 4.31\% & 3.4 & 0.42 & 168\% & 11.55\% \\
FET & 47 & 4.7 & 4.78\% & 3.1 & 0.45 & 192\% & 13.96\% \\
NEAR & 42 & 4.2 & 5.12\% & 2.8 & 0.47 & 214\% & 16.08\% \\
STX & 38 & 3.8 & 5.67\% & 2.5 & 0.51 & 250\% & 19.85\% \\
WIF & 34 & 3.4 & 6.24\% & 2.2 & 0.54 & 291\% & 24.38\% \\
\bottomrule
\end{tabular}

\vspace{0.3cm}
\emph{Note: Data represents average values over the sample period. Haircuts computed with $N_{\mathrm{target}} = 15$, $\CR_1^{\mathrm{thresh}} = 0.25$, $\lambda = 0.5$, $\mu = 0.3$.}
\end{table}

\section{Symbol Reference}
\label{app:symbols}

\begin{table}[H]
\centering
\caption{Symbol Reference}
\label{tab:symbols}
\begin{tabular}{@{}ll@{}}
\toprule
\textbf{Symbol} & \textbf{Description} \\
\midrule
$S_i(Q)$ & Slippage function for token $i$ at quantity $Q$ \\
$\SaR(\alpha)$ & $\alpha$-quantile of slippage across all tokens \\
$\ESaR(\alpha)$ & Expected slippage conditional on exceeding $\SaR(\alpha)$ \\
$\TSaR_{\$}(\alpha)$ & Total dollar slippage from tail tokens \\
$\HHI$ & Herfindahl-Hirschman Index (concentration measure) \\
$\Neff$ & Effective number of liquidity providers ($= 1/\HHI$) \\
$\CR_k$ & Top-$k$ concentration ratio \\
$h_i^{\mathrm{conc}}$ & Concentration haircut for token $i$ \\
$S_i^{\mathrm{adj}}$ & Concentration-adjusted slippage \\
$\alpha$ & Confidence level for SaR computation \\
$\beta$ & Stress parameter (fraction of OI in stress scenario) \\
$\lambda$ & Provider concentration sensitivity (haircut parameter) \\
$\mu$ & Dominance sensitivity (haircut parameter) \\
$Q_i$ & Stress notional for token $i$ \\
$\OI_i$ & Open interest for token $i$ \\
$\IF^*$ & Optimal insurance fund size \\
$r$ & Opportunity cost per unit capital \\
$\kappa$ & Reputation/social cost per unit deficit \\
$c$ & Slippage-to-deficit proportionality constant \\
$D_T$ & Total deficit at time $T$ \\
$R_T$ & Residual deficit after IF depletion \\
$\ell_{i,t}$ & Leverage of position $i$ at time $t$ \\
$p_i^{\mathrm{bk}}$ & Bankruptcy price for position $i$ \\
$p_i^{\mathrm{liq}}$ & Liquidation price for position $i$ \\
$p_i^{\mathrm{exec}}$ & Execution price for position $i$ \\
\bottomrule
\end{tabular}
\end{table}

\end{document}